\newcommand{\openone}{\leavevmode\hbox{\small1\normalsize\kern-.33em1}}
\def\UrlSpecials{\do\~{\kern -.15em\lower .7ex\hbox{~}\kern .04em}} \catcode`~=13 
\newcommand{\calA}{\mathcal{A}}
\newcommand{\calB}{\mathcal{B}}
\newcommand{\calF}{\mathcal{F}}
\newcommand{\calP}{\mathcal{P}}
\newcommand{\calS}{\mathcal{S}}
\newcommand{\calX}{\mathcal{X}}
\newcommand{\bA}{\mathbf{A}}
\newcommand{\bB}{\mathbf{B}}
\newcommand{\rma}{\mathrm{a}}
\newcommand{\rmA}{\mathrm{A}}
\newcommand{\rmb}{\mathrm{b}}
\newcommand{\rmd}{\mathrm{d}}
\newcommand{\rmD}{\mathrm{D}}
\newcommand{\bbE}{\mathbb{E}}
\newcommand{\bbN}{\mathbb{N}}
\DeclareMathAlphabet{\mathbsf}{OT1}{cmss}{bx}{n}
\DeclareMathAlphabet{\mathssf}{OT1}{cmss}{m}{sl}
\DeclareSymbolFont{bsfletters}{OT1}{cmss}{bx}{n}  
\DeclareSymbolFont{ssfletters}{OT1}{cmss}{m}{n}
\DeclareMathSymbol{\bsfGamma}{0}{bsfletters}{'000}
\DeclareMathSymbol{\ssfGamma}{0}{ssfletters}{'000}
\DeclareMathSymbol{\bsfDelta}{0}{bsfletters}{'001}
\DeclareMathSymbol{\ssfDelta}{0}{ssfletters}{'001}
\DeclareMathSymbol{\bsfTheta}{0}{bsfletters}{'002}
\DeclareMathSymbol{\ssfTheta}{0}{ssfletters}{'002}
\DeclareMathSymbol{\bsfLambda}{0}{bsfletters}{'003}
\DeclareMathSymbol{\ssfLambda}{0}{ssfletters}{'003}
\DeclareMathSymbol{\bsfXi}{0}{bsfletters}{'004}
\DeclareMathSymbol{\ssfXi}{0}{ssfletters}{'004}
\DeclareMathSymbol{\bsfPi}{0}{bsfletters}{'005}
\DeclareMathSymbol{\ssfPi}{0}{ssfletters}{'005}
\DeclareMathSymbol{\bsfSigma}{0}{bsfletters}{'006}
\DeclareMathSymbol{\ssfSigma}{0}{ssfletters}{'006}
\DeclareMathSymbol{\bsfUpsilon}{0}{bsfletters}{'007}
\DeclareMathSymbol{\ssfUpsilon}{0}{ssfletters}{'007}
\DeclareMathSymbol{\bsfPhi}{0}{bsfletters}{'010}
\DeclareMathSymbol{\ssfPhi}{0}{ssfletters}{'010}
\DeclareMathSymbol{\bsfPsi}{0}{bsfletters}{'011}
\DeclareMathSymbol{\ssfPsi}{0}{ssfletters}{'011}
\DeclareMathSymbol{\bsfOmega}{0}{bsfletters}{'012}
\DeclareMathSymbol{\ssfOmega}{0}{ssfletters}{'012}
\newcommand{\balpha}{\bm{\alpha}}
\newcommand{\veps}{\varepsilon}
\newcommand{\blambda}{\bm{\lambda}}
\DeclareMathOperator*{\argmin}{arg\,min}
\newtheorem{theorem}{Theorem} 
\newtheorem{lemma}[theorem]{Lemma}
\newtheorem{proposition}[theorem]{Proposition}
\newtheorem{definition}{Definition}
\newtheorem{remark}{Remark}
\def\BibTeX{{\rm B\kern-.05em{\sc i\kern-.025em b}\kern-.08em
    T\kern-.1667em\lower.7ex\hbox{E}\kern-.125emX}}
\begin{document}

\title{Asymptotic Nash Equilibrium for the $M$-ary Sequential Adversarial Hypothesis Testing Game
}

\author{\IEEEauthorblockN{Jiachun Pan$^*$, Yonglong Li$^*$, Vincent Y.~F.~Tan$^{*,\dagger}$}   
\thanks{This work was presented in part at the 2022 International Symposium on Information Theory (ISIT) in Espoo, Finland.}
	
\vspace{0.5em}

\IEEEauthorblockA{$^*$Department of Electrical and Computer Engineering, National University of Singapore \\
$^\dagger$Department of Mathematics, National University of Singapore
}

\IEEEauthorblockA{Emails: \url{pan.jiachun@u.nus.edu}, \url{elelong@nus.edu.sg}, \url{vtan@nus.edu.sg}}
}

\maketitle

\begin{abstract}
In this paper, we consider a novel $M$-ary sequential hypothesis testing problem in which an adversary is present and perturbs the distributions of the  samples before the decision maker observes them. This problem is formulated as  a sequential adversarial hypothesis testing game played between the decision maker and the adversary. This game  is a zero-sum and strategic one. We assume the adversary is active under \emph{all} hypotheses and knows the underlying distribution of observed samples. We adopt this framework as it is the worst-case scenario from the perspective of the decision maker. The goal of the decision maker is to minimize the expectation of the stopping time to ensure that  the test is as  efficient as possible; the adversary's goal is, instead, to maximize the stopping time.  We derive a pair of strategies under which the asymptotic Nash equilibrium of the game is attained. We also consider the case in which the adversary is not aware of the underlying hypothesis and hence is constrained to apply the same strategy regardless of which hypothesis is in effect.  Numerical results corroborate   our theoretical findings.
\end{abstract}

\begin{IEEEkeywords}
Game theory, Nash Equilibrium, $M$-ary Sequential Hypothesis Testing, Adversary
\end{IEEEkeywords}

\section{Introduction}
Hypothesis testing is a fundamental problem in statistics and information theory. There are many works that have laid  firm theoretical foundations for the fundamental limits of hypothesis testing. In this paper, we consider a new setting in which there exists an adversary that  deliberately  acts in a malicious way to cause  a sequential test implemented by a decision maker to fail~\cite{barni13}. We term this new setting as the \emph{adversarial sequential hypothesis testing game}.

We are motivated by  security and trustworthy issues of modern machine learning algorithms. Such issues  have been studied extensively in the past decade. Machine learning algorithms can be shown to be highly vulnerable to adversarial perturbations~\cite{adversary}. For example, in image classification problem, there may be adversarial samples that adversely affect  the performance of  classification tasks. The adversary may  adopt different attack strategies for images in different classes. In this case, it is important to identify the true class of images even under the perturbation of the adversary. In all these examples, when the distributions of observed samples are known, we can consider this problem under a game-theoretic framework and formulate the problem  as a hypothesis testing game played by the decision maker and the adversary. 

\subsection{Related Works}

The works that are closely related to the present paper are those by Barni and Tondi~\cite{barni13,barni14,barni18,Barniwhole}. 
In these works, the authors considered a general framework to analyze   binary hypothesis testing   by taking into account the presence of an adversary who aims to impede the making of a correct decision. They introduced and analyzed an adversarial version of the Neyman--Pearson setup in which a  defender and an adversary face off against each other.  Given a null hypothesis $H_0$ and a test sequence $Z^n$, the defender must decide whether to accept hypothesis $H_0$ characterized by a distribution $P_0$. As in the classical Neyman--Pearson scenario, the defender must ensure that the type-I error probability (i.e., the probability of rejecting $H_0$ when $H_0$ is true) is no larger than a  prescribed constant $\alpha\in (0,1)$. In turn, the adversary observes a sequence $Y^n$ generated under an alternative hypothesis $H_1$, characterized by a different distribution $P_1$, and transforms it into a modified sequence so that when presented with the modified sequence, the defender still accepts $H_0$. In other words, the adversary aims at maximizing the type-II error probability (i.e., the probability that the defender accepts $H_0$ when $H_1$ holds), while the defender's goal is to minimize it by taking into account the presence of the adversary. In the  setting of~\cite{barni13,barni14,barni18,Barniwhole}, adversarial hypothesis testing is modeled as a zero-sum game. In~\cite{barni13}, the authors consider the case in which $P_0$ and $P_1$ are both known to the defender and the adversary. They showed that under certain assumptions, the game admits asymptotic Nash equilibrium and obtained the optimum strategies for the decision maker and the adversary at the equilibrium. In~\cite{barni14}, the authors extended their previous works by considering a scenario in which $P_0$ is known only through one or more training sequences. They also derive the asymptotic Nash equilibrium of this setting. In~\cite{barni18}, the authors also assume $P_0$ is known through training sequences but in our paper, the training data is   corrupted by an adversary. 

While~\cite{barni13,barni14,barni18} characterize the adversarial hypothesis testing problem when the adversary is only active in {\em one of the two} hypotheses, it is also reasonable to consider the case when the adversary is active under \emph{all} hypotheses. Tondi, Barni, and Merhav~\cite{tondi2015detection} extended the game-theoretic formulation of the defender-adversary interaction to the case where the attacker acts under both hypotheses. Under this setting, a dominant (i.e., optimal regardless of what the defence strategy is) and universal (i.e., not dependent on  the underlying sources) adversary strategy can be obtained. Furthermore,  Jin and Lai~\cite{yulu} also focus on this setting but they formulated it as a minimax problem. They obtain a nonasymptotic saddlepoint solution which reveals the optimal attack and defense strategies.  

Instead of directly perturbing the observed   sequence of samples, there are also works that permit the  adversary to perturb the underlying {\em distributions}. In Yasodharan and Loiseau~\cite{Yasodharan}, the adversary chooses any distribution from a set of distributions and assigns each choice of distribution a cost function. Then they considered non-zero-sum hypothesis testing games in both the Bayesian and the Neyman--Pearson frameworks. The authors showed that these games admit mixed strategy Nash equilibra. Zhang and Zou~\cite{ruizhi}  extended the non-zero-sum hypothesis testing games in~\cite{Yasodharan} to the sequential case and obtain the asymptotic Nash equilibrium. They  first guessed the strategy $s_{\rma}$ that the adversary adopts and then designed a strategy $s_{\rmd}(s_{\mathrm{a}})$ of the decision maker based on adversary's strategy. However, their methods cannot be extended to the case when the adversary is active in all hypotheses.

Another line of work that is similar to our setting is the \emph{robust hypothesis testing problem}. A robust binary hypothesis test is a  \emph{minimax} test for two hypotheses where the actual probability distributions of the observations are located in neighborhoods of a nominal density. The actual and nominal distributions are constrained in terms of a certain distance measure such as the relative entropy~\cite{RHT}, the $\alpha$-divergence~\cite{gul2016robust}, and the Wasserstein distance~\cite{gao2018robust}. The above-mentioned works show that the minimax solution is an optimal test based on  the \emph{least favorable distributions} (LFDs), i.e., a test that optimally separates the closest feasible distributions.
For the $M$-ary case,  Fau{\ss}, Zoubir, and Poor~\cite{Fau}  considered a sequential $M$-ary robust hypothesis testing problem. They showed that the minimax solution is also an optimal test for the LFDs, but now the LFDs depend on the previous observations. This results in the  sequence of samples being no longer i.i.d., but rather being a Markov process. In a follow-up work~\cite{fauss2020minimax}, the same authors obtain sufficient conditions for strict minimax optimality of sequential tests for multiple hypotheses under mild Markov assumptions.  The differences between robust hypothesis testing and our problem are discussed in more detail in Remark~\ref{rmk:robust_HT} in Section~\ref{sec:formulation}.

\subsection{Main Contributions}

In this paper, we focus on the $M$-ary sequential adversarial hypothesis testing game. There are $M$ hypotheses $H_i,i\in[M]$ and they are characterized by $M$ different distributions $P_i,i\in[M]$ respectively.  Samples are collected \emph{sequentially} and they are perturbed by the adversary. For the most part of the paper, we assume the adversary knows the underlying distribution of the observed samples, and has the ability to perturb the samples {\em based on which hypothesis is in effect}. 
There are four distinct contributions in our paper. 
\begin{itemize}
    \item We formulate a sequential adversarial $M$-ary hypothesis testing game and state our objective in terms of finding an asymptotic Nash equilibrium between the player and the adversary. The player's objective is a linear combination of error exponents; this is in contrast to other works in robust hypothesis testing (see Remark~\ref{rmk:robust_HT} for details). The adversary is assumed to be powerful; it knows the true distributions, which hypothesis is in effect, and can perturb the player's observations under {\em both} hypotheses.
    \item We derive  optimal strategies  for the player and the adversary that yield an asymptotic Nash equilibrium for this two-player sequential game using information-theoretic tools. Different from~\cite{ruizhi} in which the decision maker first estimates the strategy $s^*_{\rma}$ that the adversary will adopt, and {\em then} designs its strategy $s_{\rmd}(s^*_{\mathrm{a}})$ based on the estimates, in our work, both strategies are executed {\em simultaneously}.
    \item We discuss the case when the adversary is incognizant of the underlying distributions of the observations. This is a weaker form of the adversary. Even though we are unable to obtain the pair of strategies that achieves the asymptotic Nash equilibrium, we show that the decision maker can achieve larger  error exponents compared to the adversary-aware setting. 
    \item Numerical results corroborate our theoretical findings. Specifically, we show on synthetic and real datasets that the empirical performance of the proposed strategies converge to their promised fundamental limits. 
\end{itemize}


\subsection{Paper Outline}

The rest of the paper is structured as follows.  In Section~\ref{sec:pre}, we introduce some preliminary knowledge on the $M$-ary sequential hypothesis testing and two player games. In Section~\ref{sec:formulation}, we formulate the $M$-ary sequential adversarial hypothesis testing problem formally and introduce the definition of asymptotic Nash equilibrium.  In Section~\ref{sec:main}, we present our main theorem (Theorem~\ref{thm:advboth}) about the set of strategies at which the asymptotic Nash Equilibrium can be obtained and the proof of our main theorem. In Section~\ref{sec:nonawareness}, we consider a weaker form of the adversary who does not know the underlying distributions and derive bounds on the performance of the decision maker. In Section~\ref{sec:example}, we provide some numerical simulations. We conclude the paper in Section~\ref{sec:con} and propose some directions for future researches.

\section{Preliminaries}
\label{sec:pre}

\subsection{$M$-ary Sequential Hypothesis Testing}
In this section, we discuss the $M$-ary sequential hypothesis testing setup~\cite{MSHT}. Let $\{ X_i \}_{i=1}^\infty$ be a  sequence of independent and identically distributed (i.i.d.)  random variables with distribution $P$, and let $H_i$ be the hypothesis that $P=P_i$ for $i=1,2,\ldots,M$. We assume that $P_i\neq P_j$ for all $i \ne j$. The objective of this problem is to uncover the true hypothesis with a desired accuracy as quickly as possible (i.e., using the fewest number of samples). In this problem, there is a fundamental  tradeoff between the number of samples and the error probabilities. 

We will use {\em sequential} tests to learn the underlying hypothesis. Such tests consist  of  stopping rules and  final decision rules. The stopping rule determines the number of samples that are collected until a decision is made and the final decision rule decides which of the $M$ hypotheses is the true one. 

For $n\geq 1$, we define the {\em  log-likelihood ratio} between distributions $P_i$ and $P_j$ as
\begin{align*}
    S_{ij}(n)=\sum_{k=1}^n\log\frac{P_i(X_k)}{P_j(X_k )}.
\end{align*}
For a {\em threshold} or {\em boundary matrix} $\bB=[B_{ij}]$, with $B_{ij}>0$ and the $B_{ii}=0$, define the \emph{matrix sequential probability ratio test} (MSPRT)~\cite{tartakovsky2014sequential} $\delta^*_M=(T_M^*,d_M^*)$ that is constructed based on   $(M+1)M/2$ one-sided SPRTs between hypotheses $H_i$ and $H_j$  as follows: The stopping rule is 
\begin{align*}
    \mbox{Stop at the first } n\!\ge\! 1~\mbox{s.t. } \exists \, i\in [M]~\mbox{s.t. }  S_{ij}(n)\geq B_{ij}~\forall\, j\!\ne\! i.
\end{align*}
Accept the unique $i \in [M]$ that satisfies these inequalities. Note that for $M=2$ this test coincides with Wald's   sequential probability ratio test (SPRT)~\cite{wald1948}. It can be shown that the MSPRT with proper thresholds is first-order asymptotically optimal in the sense of minimizing the expected sample sizes for all hypotheses~\cite[Chapter~4.3]{tartakovsky2014sequential}, i.e. for all tests $\delta_M=(T_M,d_M)$ with all error probabilities upper bounded by $\alpha_{\max} \in (0,1)$, we have
\begin{align*}
    \lim_{\alpha_{\max}\to 0}\inf_{\delta_M} \bbE_i[T_M]= \bbE_i[T^*_M],~\mbox{for all } i=1,2,\ldots,M,
\end{align*}
where $T_M^*$ is the optimal stopping time. 

\subsection{Two Player Games}

We now provide a brief introduction to two-player games. For a more detailed exposition, the reader is referred to~\cite{osborne2004introduction}. A {\em two-player game} is defined as a quadruple $(\calS_1,\calS_2,u_1,u_2)$, where $\calS_1$, $\calS_2$ are the sets of strategies (actions) the first and the second player can choose from, and $u_i (s_1,s_2)$ (where $s_1\in\calS_1$ and $s_2\in\calS_2$) is the \emph{payoff} (i.e., the gain) for player $i\in\{1,2\}$, when the first player chooses the strategy $s_1\in\calS_1$ and the second choose $s_2\in\calS_2$. A pair of strategies $(s_1,s_2)$ is called a \emph{profile}. In a zero-sum competitive game, the sum of the two payoffs is equal to $ 0$, i.e., $u_1(s_1, s_2)+u_2(s_1,s_2)=0$ for all $(s_1,s_2)\in\calS_1\times\calS_2$. In other words, the gain of a player is equal to the loss of the other. We define the payoff function for a zero-sum game as $u=u_1=-u_2$. A \emph{strategic} game is a model of interaction in which each player chooses an action not having been informed of the other player's action. We can think of the players' action as being taken ``simultaneously''. One common goal is to obtain a {\em Nash equilibrium}~\cite{Nash48} of a zero-sum, strategic game, which is defined as follows. A profile $(s_1^*, s_2^*)$ is a {\em Nash equilibrium} if:
\begin{alignat*}{2}
 u(s_1^*, s_2^*)&\geq u(s_1, s_2^*)\quad &&\forall s_1\in\calS_1,\quad\mbox{and}\\
u(s_1^*, s_2^*)&\leq u(s_1^*, s_2)\quad && \forall s_2\in\calS_2.
\end{alignat*}
In other words, a profile is a Nash equilibrium if no player can increase his/her payoff by changing his/her strategy unilaterally.

\section{Problem Formulation}
\label{sec:formulation}
In this section, we first formulate the sequential $M$-ary adversarial hypothesis testing game.  Let ${\cal X}=\{a_1, a_2, \ldots, a_K\}$ be the finite alphabet of the source and $\calP (\calX)$ be the set of probability mass functions (also called distributions) supported on $\calX$.  There are $M$ hypotheses. We use $[M]$ to denote the   finite set $\{1, \ldots, M\}$. Under hypothesis $H_i$, the underlying distribution is $P_i$ for $i \in [M]$. We also assume that the distributions $P_i, i\in[M]$ are  known to both the decision maker and the adversary. Here the adversary perturbs the distribution; this has the effect of passing the original samples $\{X_i\}_{i=1}^\infty$ through a discrete memoryless channel, which we denote by $\bA$ where the entries $[\bA]_{lj}=\Pr(Y=a_j|X=a_l)$ for $l,j\in [K]$. We assume that the channels $[\bA]$ are chosen such that $\sum_{l=1}^K P_i(X=a_l)[\bA]_{lj}>0$ for $j\in [K]$ and $i\in[M]$, which means that the distribution of $Y$ (i.e., $Y\sim P_i\bA, i\in[M]$) has full support.  Besides, motivated by the fact that the adversary's power is bounded, we  impose a distance constraint between the input distribution and output distribution of the adversary. This is  characterized by distance measure/metric $d$. Here we do not specify the choice of measure $d$ for now and we aim to  obtain results  under some specific conditions on $d$. Then the adversary's constraint is 
\begin{align*}
d(P_i,P_i\bA)\leq \Delta, \qquad \forall\, i\in[M],
\end{align*}
where $\Delta>0$ is a prescribed maximum distance between $P_i$ and $P_i\bA$ and $\Delta$ should be small to ensure that $\min_{i,j\in[M],i\neq j}D(P_i\bA_i\|P_j\bA_j)\ge\epsilon>0$. This means the KL divergences between perturbed distributions is positive. This constraint is important as it ensures that the true hypothesis can be learned uniquely. Besides, we assume $\Delta$ is known to the decision maker. 

When the adversary is active under all hypotheses, there are two different scenarios we can consider. Firstly, the adversary knows underlying hypothesis $H_i, i\in[M]$ and secondly, the adversary does not. For the majority of the paper, we  consider the {\em awareness} case as it is the worst-case scenario from the perspective of the decision maker. Later in Section~\ref{sec:nonawareness}, we discuss the {\em non-awareness} case, i.e., the adversary is not aware the underlying distribution.

Fig.~\ref{fig:actboth} shows the $M$-ary sequential adversarial hypothesis testing game when the adversary knows the underlying distribution of $X$. At each time $n\in\bbN$, a sample $X_k$ is generated from $P_i$ and given to the adversary. The adversary modifies $X_k$ to $Y_k$ using the attack strategy. Here we note that the adversaries are different for $H_i,i\in[M]$. We denote the adversary's strategy/channel under $H_i$ as $\bA_i$ for $i\in[M]$. Based on the adversary strategy, the distribution of $Y$ is $P_i\bA_i,i\in[M]$.   Then the objective of decision maker is to decide which hypothesis is true based on the sequence up to the current time $\{Y_k\}_{k=1}^n$.

\begin{figure}[t]
	\centering
	\includegraphics[width=\linewidth]{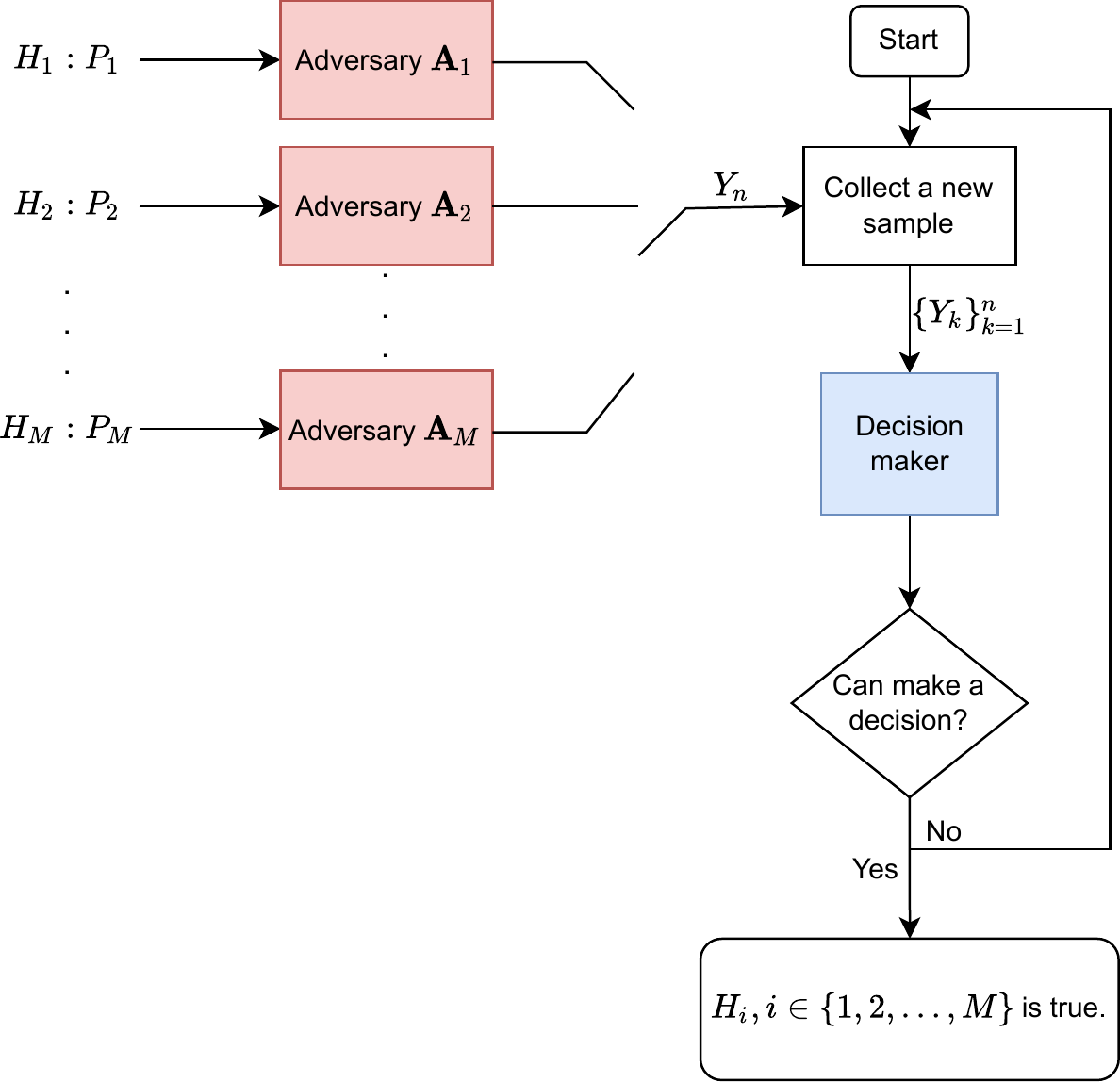}
	\caption{Illustration of sequential adversarial hypothesis testing when the adversary is active in both hypotheses}
	\label{fig:actboth}
\end{figure}

We define the integer-valued random variable $T\in\bbN$ as the stopping time with respect to the filtration $\{\calF_n=\sigma(Y_1, Y_2,\ldots,Y_n)\}_{ n\in\bbN}$ generated by the  samples up to time $n$. To achieve the goal, the decision maker  at each time $n$ can take one of two actions:
\begin{itemize}
\item Stop drawing a new sample and declare that one of $H_i,i\in[M]$ is true.
\item Continue to draw a new sample.
\end{itemize}
We denote the expectation of the stopping time under $H_i$ as $\bbE_i[T]$ for $i\in[M]$. The decision rule $\delta$ is a $M$-valued $\calF_{T}$-measurable function.  A test is a pair $\Phi=(T,\delta)$. To avoid trivialities, we only consider the tests with finite expected  stopping time (i.e., $\bbE_i[T]<\infty$ for all $i\in[M]$) in the sequel.

Let $\alpha_{ij}(\Phi,(\bA_1,\ldots,\bA_M))=P_i(\delta=j)$ for $i\neq j$ be the error probabilities of the test $\Phi$, i.e., the probabilities of accepting a specific $H_j$ when $H_i$ is true. Then $\alpha_i(\Phi,(\bA_1,\ldots,\bA_M))=P_i(\delta\neq i),i\in[M]$ is the probability of rejecting   hypothesis $H_i$ and $\alpha_i=\sum_{j\neq i}\alpha_{ij}$. We denote $\balpha=(\alpha_1,\alpha_2,\ldots,\alpha_M)$ as the error probability vector.
We now provide a formal definition of the \underline{$M$}-ary \underline{SEQ}uential \underline{A}dversarial \underline{H}ypothesis \underline{T}esting problem in the adversary awareness case. We denote this problem as MSEQ-AHT($\calS_{\rmD}(\alpha),\calS_{\rmA}(\Delta),u_{\blambda}^{(\alpha)})$.
\begin{definition}
The MSEQ-AHT($\calS_{\rmD}(\alpha),\calS_{\rmA}(\Delta),u_{\blambda}^{(\alpha)})$ is a zero-sum, strategic, game played by the decision maker and the adversary, defined by the following strategies and payoff.
\begin{itemize}
\item The admissible set of strategies that the adversary can choose from is  
\begin{align}
\label{eqn:stratA}
\!\!\calS_{\rmA}(\Delta)\!:=\!\bigg\{\!(\bA_1,\ldots,\bA_M):\! \max_{i\in[M]}\! d(P_i,P_i\bA_i)\!\leq\! \Delta \bigg\}.
\end{align} 

\item The set of  strategies that the decision maker can choose from is
\begin{align}
\calS_{\rmD}(\alpha):=\bigg\{\Phi:\max_{i\in[M]}\sup_{(\bA_1,\ldots,\bA_M)\in\calS_A(\Delta)}\alpha_i(\Phi)\leq \alpha 
\bigg\}.
\end{align}

\item The payoff when the strategy of decision maker is $\Phi$ and the strategy of the adversary is $(\bA_1,\ldots,\bA_M)$ is
\begin{align}
u_{\blambda}^{(\alpha)}(\Phi,(\bA_1,\ldots,\bA_M))=\sum_{i=1}^M\lambda_i\frac{\log\frac{1}{\alpha}}{\bbE_i[T]},
\end{align}
where $\blambda=(\lambda_1,\ldots,\lambda_M)$ is a vector with positive elements (weights)  that reflects the relative importances of the expected stopping times  $\bbE_i[T]$ for all $i\in[M]$.
\end{itemize}
\end{definition}

In the definition of MSEQ-AHT($\calS_{\rmD}(\alpha),\calS_{\rmA}(\Delta),u_{\blambda}^{(\alpha)})$, the set of strategies for the adversary is comprised of all transition matrices that satisfy the distortion constraints. The set of strategies for the decision maker is comprised of all tests that the test error probabilities are upper bounded by a common $\alpha$. Besides, the payoff is  a linear combination
of the error exponents of the error probabilities $\alpha_i,i\in[M]$, and the decision maker wants to maximize it to make the detection more accurate and efficient, while the adversary wants to minimize it. For MSEQ-AHT($\calS_{\rmD}(\alpha),\calS_{\rmA}(\Delta),u_{\blambda}^{(\alpha)})$, our goal is to obtain a profile $(\Phi^*,(\bA_1^*,\ldots, \bA_M^*))$ that achieves the {\em asymptotic Nash equilibrium} as $\alpha$ tends to zero, which is defined as follows.
\begin{definition}[Asymptotic Nash Equilibrium]
We say that the (family of) profile(s) $(\Phi^*,(\bA_1^*,\ldots, \bA_M^*))$  (indexed by $\alpha>0$) satisfies the asymptotic Nash equilibrium as $\alpha\to 0^+$ if
\begin{align}
&\lim_{\alpha\to0^+} u_{\blambda}^{(\alpha)}(\Phi^*,(\bA_1^*,\ldots, \bA_M^*))\notag\\
&\;\;\geq \lim_{\alpha\to0^+} \sup_{\Phi\in\calS_{\rmD}(\alpha)}u_{\blambda}^{(\alpha)}(\Phi,(\bA_1^*,\ldots, \bA_M^*)),\label{eqn:opt1}
\end{align}
and
\begin{align}
&\lim_{\alpha\to0^+} -u_{\blambda}^{(\alpha)}(\Phi^*,(\bA_1^*,\ldots, \bA_M^*))\notag\\
&\;\;\geq \!\lim_{\alpha\to0^+}\!\sup_{(\bA_1,\ldots,\bA_M)\in\calS_{\rmA}(\Delta)}\! -u_{\blambda}^{(\alpha)}(\Phi^*,(\bA_1,\ldots, \bA_M))\label{eqn:opt2}.
\end{align}
\end{definition}


\begin{remark}
Our problem setting is similar to that of the sequential composite hypothesis testing~\cite{pan2022} framework in which   samples are  generated i.i.d.\ by a distribution from a known set of distributions. However, in our work, the set of distributions is determined by the adversary and there is a payoff function that controls the choice of the strategies the adversary and the decision maker.
\end{remark}

\begin{remark} \label{rmk:robust_HT}
Our problem is, however, different from \emph{robust hypothesis testing}~\cite{RHT}. In robust hypothesis testing problems, the true probability distributions are located in the neighborhoods of a nominal distribution. Instead,  in our setting, we assume that the actual distribution is formed by the adversary's perturbation by transition matrices. Besides, in minimax $M$-ary sequential hypothesis tests, e.g., in \cite{fauss2020minimax}, the authors  typically consider finding a sequential test $\Phi$ that minimizes the maximum of expectation of the stopping times over different distributions with the constraints that the  error probabilities $\alpha_i,i\in[M]$ are upper bounded by fixed constants $\bar{\alpha}_i\in[0,1]$ for all $i\in[M]$, i.e.,
\begin{align*}
    \min_{\Phi}\max_{i\in[M]}\bbE_i[T]\quad 
    \mbox{s.t. }\quad \max_{i\in[M]} \alpha_i\leq \bar{\alpha}_i.
\end{align*}
In our problem setting, we consider a linear combination of the exponents $\frac{\log(1/\alpha)}{\bbE_i[T]},i\in[M]$ as the decision maker's payoff function. 
\end{remark}

\section{Main Results}
\label{sec:main}

To obtain the asymptotic Nash equilibrium of MSEQ-AHT($\calS_{\rmD}(\alpha),\calS_{\rmA}(\Delta),u_{\blambda}^{(\alpha)})$, we first propose   strategies for the decision maker and  adversary. Then we prove that this pair of strategies achieves the asymptotic Nash equilibrium.

Define the \emph{type} or \emph{empirical distribution} of the sequence $x^m \in \calX^m$ as
\begin{align*}
\hat{Q}_{x^m}(a):=\frac{1}{m}\sum_{i=1}^m \mathbbm{1}\{x_i=a\},\quad \forall\, a\,\in\calX.
\end{align*}
Denote $\calA_i(\Delta):=\{\bA_i:d(P_i,P_i\bA_i)\leq\Delta\}$ for $i\in[M]$. Then $\calS_{\rmA}(\Delta)=\calA_1(\Delta)\times\ldots\times\calA_M(\Delta)$. For simplicity,   we abbreviate $\calA_i(\Delta)$ as $\calA_i$ for $i\in[M]$. 
Let $\zeta = 0.85$ from now on.\footnote{The constant $0.85$ for $\zeta$ is somewhat arbitrary; any number in $(0,1)$ works for our analyses. We found that $\zeta=0.85$ works best in our numerical experiments.} Define a threshold  
\begin{align}
\label{eqn:threshold}
\gamma_n:=\frac{\log{\frac{C}{\alpha}}}{n}+ \frac{1}{n^{ \zeta}}+\frac{|\calX|\log(n+1)+\log (M-1)}{n},
\end{align}
where $C:=\sum_{n=1}^{\infty}e^{-n^{1-\zeta}}<\infty$ is a    finite constant. 
Define  $$Z_i^{(n)}:=\min_{ j\in[M],j\neq i}\Big[\min_{\bA_j\in\calA_j}D(\hat{Q}_{Y^n}\|P_j\bA_j)\Big].$$ 
Now we define a stopping time as 
\begin{align}
\label{eqn:stop}
T^{*} =T_{\alpha}^*
:= \inf  \big\{ n\geq 1:  \exists \, i\in[M]~\mbox{ s.t. }
Z_i^{(n)}\geq \gamma_n\big\},
\end{align}
and for $i\in[M]$,
\begin{align*}
T_i:=\inf\big\{n\geq 1~:~Z_i^{(n)}\geq \gamma_n\big\}.
\end{align*}
We also define the decision rule as for $i\in[M]$,
\begin{align}
\label{eqn:decision}
\delta^*:=i\quad \mbox{if}\quad T^*=T_i.
\end{align}
Finally,  define 
\begin{align}
\bA_i^{*}:=\argmin_{\bA_i\in\calA_i} \bigg[\min_{ j\in[M]\setminus\{ i\}}\Big[\min_{\bA_j\in\calA_j}D(P_i\bA_i\|P_j\bA_j)\Big]\bigg].\label{eqn:opta0}
\end{align}
We note that $\bA_i^*$ may not be unique.

Then the test used by the decision maker is $\Phi^{*}=(T^{*},\delta^{*})$.  Now we have the following theorem:
\begin{theorem}
\label{thm:advboth}
If $\calS_{\rmA}(\Delta)$ is a compact set, then for any $\blambda$ in which all elements are positive, $(\Phi^*,(\bA_1^*,\ldots,\bA_M^*))$ defined in~\eqref{eqn:stop}--\eqref{eqn:opta0} is the profile that attains the asymptotic Nash equilibrium as $\alpha\to 0^+$. Besides, the payoff at the asymptotic Nash equilibrium is 
\begin{align}
    &\lim_{\alpha\to0^+} ~u_{\blambda}^{(\alpha)}(\Phi^*,(\bA_1^*,\ldots, \bA_M^*))\nonumber\\
    &=\sum_{i=1}^M \lambda_i \min_{j\in[M]\setminus\{i\}}\left[\min_{\bA_j\in\calA_j} D(P_i\bA_i^*\|P_j\bA_j)\right].\label{eqn:aware_case}
\end{align}
\end{theorem}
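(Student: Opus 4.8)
The plan is to establish the two inequalities \eqref{eqn:opt1} and \eqref{eqn:opt2} separately, and along the way to compute the limiting payoff \eqref{eqn:aware_case}. The backbone of the argument is a sharp asymptotic analysis of the expected stopping time $\bbE_i[T^*]$ of the proposed test under a \emph{fixed} adversary profile, together with a matching converse. First I would analyze the test $\Phi^*$ itself. Under $H_i$ with adversary channels $(\bA_1,\dots,\bA_M)$, the observations $Y_1,Y_2,\dots$ are i.i.d.\ from $P_i\bA_i$. By Sanov's theorem and the continuity/compactness of the sets $\calA_j$, the empirical distribution $\hat Q_{Y^n}$ concentrates around $P_i\bA_i$, so $Z_i^{(n)}\to \min_{j\neq i}\min_{\bA_j\in\calA_j} D(P_i\bA_i\|P_j\bA_j)$ almost surely, while $\gamma_n\to 0$. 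A renewal-type / Wald-identity argument (as in the standard MSPRT analysis of \cite{tartakovsky2014sequential}) then gives
\begin{align*}
\lim_{\alpha\to0^+}\frac{\bbE_i[T^*]}{\log(1/\alpha)}=\frac{1}{\min_{j\neq i}\min_{\bA_j\in\calA_j}D(P_i\bA_i\|P_j\bA_j)},
\end{align*}
the point being that the dominant term in $\gamma_n$ is $\frac{\log(1/\alpha)}{n}$ and the correction terms ($n^{-\zeta}$ and the polynomial type-counting term) are negligible after dividing by $\log(1/\alpha)$ and letting $n$ grow. Simultaneously I would verify $\Phi^*\in\calS_{\rmD}(\alpha)$: a union bound over the at most $(n+1)^{|\calX|}$ types and over the $M-1$ competing hypotheses, combined with the method-of-types bound $P_i\bA_i(\hat Q_{Y^n}=Q)\le e^{-nD(Q\|P_i\bA_i)}$, shows the false-accept probability is at most $\alpha/C\cdot\sum_n e^{-n^{1-\zeta}}=\alpha$, uniformly over all adversary profiles in $\calS_{\rmA}(\Delta)$ — this is precisely why the constants $C$, $\zeta$ and the type-counting term were put into $\gamma_n$.

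For the decision-maker side \eqref{eqn:opt1}, I fix the adversary at $(\bA_1^*,\dots,\bA_M^*)$ and must show no test in $\calS_{\rmD}(\alpha)$ does asymptotically better than $\Phi^*$. Here I would invoke the converse for $M$-ary sequential testing: for any test with all error probabilities $\le\alpha$, $\bbE_i[T]\gtrsim \frac{\log(1/\alpha)}{\min_{j\neq i}D(P_i\bA_i^*\|P_j\bA_j^*)}$; but since $\bA_i^*$ minimizes $\min_{j\neq i}\min_{\bA_j\in\calA_j}D(P_i\bA_i\|P_j\bA_j)$, we have $\min_{j\neq i}D(P_i\bA_i^*\|P_j\bA_j^*)=\min_{j\neq i}\min_{\bA_j\in\calA_j}D(P_i\bA_i^*\|P_j\bA_j)$, and the matching achievability from the paragraph above pins the limit of $u_{\blambda}^{(\alpha)}(\Phi,(\bA^*))$ at exactly the right-hand side of \eqref{eqn:aware_case}. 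For the adversary side \eqref{eqn:opt2}, I fix the decision maker at $\Phi^*$ and let the adversary deviate to an arbitrary $(\bA_1,\dots,\bA_M)\in\calS_{\rmA}(\Delta)$; the achievability computation gives $\lim_\alpha u_{\blambda}^{(\alpha)}(\Phi^*,(\bA))=\sum_i\lambda_i\min_{j\neq i}\min_{\bA_j'\in\calA_j}D(P_i\bA_i\|P_j\bA_j')$, and I must argue this is $\ge\sum_i\lambda_i\min_{j\neq i}\min_{\bA_j'\in\calA_j}D(P_i\bA_i^*\|P_j\bA_j')$ term by term, which is immediate from the definition \eqref{eqn:opta0} of $\bA_i^*$ as the minimizer. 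Compactness of $\calS_{\rmA}(\Delta)$ guarantees these minima are attained, so the expressions are well defined.

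The main obstacle I anticipate is the uniformity of the stopping-time asymptotics — specifically, showing that $\limsup_\alpha \bbE_i[T^*]/\log(1/\alpha)$ is upper bounded (not just that the almost-sure limit of $Z_i^{(n)}$ is correct). One needs a uniform integrability / exponential-tail argument for $T^*$: after $n$ exceeds roughly $\frac{\log(1/\alpha)}{Z_i^{\star}-\veps}$ (where $Z_i^{\star}$ denotes the limiting divergence), the event $\{Z_i^{(n)}<\gamma_n\}$ has exponentially small probability by Sanov, and these tail probabilities must be summed to control $\bbE_i[T^*]=\sum_{n\ge1}\Pr_i(T^*\ge n)$. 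The delicate point is that the "radius" $\gamma_n$ is itself shrinking at rate $n^{-\zeta}$, so one has to check that the gap $Z_i^{\star}-\gamma_n$ stays bounded away from zero on the relevant range of $n$; this is where the choice $\zeta\in(0,1)$ (rather than, say, $\zeta=1$) matters. A secondary technical point is continuity of $\bA\mapsto \min_{\bA'\in\calA_j}D(P_i\bA\|P_j\bA')$ on the compact set $\calA_i$, needed both for the $Z_i^{(n)}$ convergence and to ensure $\bA_i^*$ exists; this follows from joint lower-semicontinuity of relative entropy together with compactness, plus the full-support assumption $P_i\bA_i>0$ which keeps the divergences finite.
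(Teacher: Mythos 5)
Your overall architecture matches the paper's: a method-of-types union bound to show $\Phi^*\in\calS_{\rmD}(\alpha)$, an almost-sure stopping-time analysis upgraded to convergence in mean via uniform integrability, and a Wald-type converse; you also correctly flag uniform integrability as the delicate point in the achievability. However, there are two genuine gaps in the Nash-equilibrium part, both at steps where the paper does something you have skipped.

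First, your converse for \eqref{eqn:opt1} hinges on the identity $\min_{j\neq i}D(P_i\bA_i^*\|P_j\bA_j^*)=\min_{j\neq i}\min_{\bA_j\in\calA_j}D(P_i\bA_i^*\|P_j\bA_j)$, which you call immediate from the definition of $\bA_i^*$. It is not: $\bA_j^*$ is defined in \eqref{eqn:opta0} to minimize the divergence \emph{from} $P_j\bA_j$ to the nearest other perturbed family, not to minimize $\bA_j\mapsto D(P_i\bA_i^*\|P_j\bA_j)$; since the KL divergence is asymmetric these are different optimization problems, and in general $D(P_i\bA_i^*\|P_j\bA_j^*)>\min_{\bA_j}D(P_i\bA_i^*\|P_j\bA_j)$. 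With only the weaker inequality your converse exponent exceeds the achievable one and \eqref{eqn:opt1} does not close. The paper's fix is to exploit the definition of $\calS_{\rmD}(\alpha)$: any admissible $\Phi$ controls all error probabilities uniformly over the \emph{entire} composite class $\calS_{\rmA}(\Delta)$, so the simple-hypothesis lower bound $\bbE_i[T]\ge(1-\alpha)\log\frac{1-\alpha}{\alpha}\big/D(P_i\bA_i^*\|P_j\tilde{\bA}_j)$ holds for every $\tilde{\bA}_j\in\calA_j$ simultaneously, while the left-hand side does not depend on $\tilde{\bA}_j$; maximizing the right-hand side over $\tilde{\bA}_j$ then produces the $\min_{\bA_j}$ in the denominator directly, with no appeal to the false identity.

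Second, for \eqref{eqn:opt2} you compute the pointwise limit of the payoff for each fixed adversary deviation and compare term by term. That establishes $\sup_{\bA}\lim_{\alpha\to0^+}(-u)\le\lim_{\alpha\to0^+}(-u(\Phi^*,(\bA_1^*,\ldots,\bA_M^*)))$, but \eqref{eqn:opt2} has the supremum \emph{inside} the limit, and interchanging $\lim_{\alpha\to0^+}$ with $\sup_{(\bA_1,\ldots,\bA_M)\in\calS_{\rmA}(\Delta)}$ requires the convergence of $\bbE_i[T^*]/\log(1/\alpha)$ to be uniform over the compact set of adversary strategies. This is a nontrivial step (the paper's Lemma~\ref{lem:uniform}, proved in a full appendix via error-term bounds whose constants are independent of the adversary's channel); it does not follow from compactness and pointwise convergence alone. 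A related, smaller omission: the almost-sure convergence of $\min_{\bA_j\in\calA_j}D(\hat{Q}_{Y^{T^*}}\|P_j\bA_j)$, where the minimum is over a whole compact family, needs a stochastic equicontinuity argument (the paper invokes the stochastic Arzel\`a--Ascoli lemma), not just Sanov's theorem plus continuity of a fixed divergence.
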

Theorem~\ref{thm:advboth} shows that as $\alpha\to 0^+$, the decision maker cannot increase the payoff function (i.e., the linear combination of error exponents) by changing its strategy $\Phi^*$ without the adversary changing its strategy $(\bA_1^*,\ldots,\bA_M^*)$. This is the implication of~\eqref{eqn:opt1}. Similarly, as $\alpha\to 0^+$, the payoff function cannot be increased by the adversary changing its strategy $(\bA_1^*,\ldots,\bA_M^*)$ when the strategy of decision maker is fixed  to be $\Phi^*$. This is the implication of~\eqref{eqn:opt2}. We can also find that the strategies at the asymptotic Nash Equilibrium is independent of the choice of $\lambda_i$ for all $i \in [M]$.

For  the optimal strategy $(\bA_1^*,\ldots,\bA_M^*)$ of the adversary, they can be obtained by solving the optimization problems in~\eqref{eqn:opta0}. As the KL divergence $D(Q_0\|Q_1)$ is convex in $(Q_0,Q_1)$, if we choose the distance measure $d$ that results in $\calS_{\rmA}$ being convex (such as  the KL divergence), although obtaining a closed-form solution is difficult, we can solve the  optimization problem numerically using  off-the-shelf convex optimization software.

Now we prove Theorem~\ref{thm:advboth}. The proof consists of three parts. The ideas in the first two parts are adopted from the proof in~\cite{veeravalli}, but here we need to verify some technical conditions such as the stochastic equicontinuity of a certain family of random variables. The third part is, to the best of our knowledge, original.
\begin{proof}[Proof of Theorem~\ref{thm:advboth}]
The proof of  Theorem~\ref{thm:advboth} proceeds in three distinct parts.

\emph{Part 1: Proof of $\Phi^*\in\calS_{\rmD}(\alpha)$.}  We need to show that for $i\in[M]$, 
\begin{align}
\label{eqn:type1}
    \sup_{(\bA_1,\ldots,\bA_M)\in\calS_{\rmA}}\alpha_i(\Phi^*,(\bA_1,\ldots,\bA_M))\leq \alpha.
\end{align}
 We first recall a lemma from~\cite[Theorem 11.2.1]{cover2006elements}.
 \begin{lemma}
 \label{lem:concen}
 If $Y_1,Y_2,\ldots,Y_n$ are i.i.d.\ generated according to a distribution $Q$, for any  $\epsilon>0$,  we have
 \begin{align}
 P_0\Big(D(\hat{Q}_{Y^n}\|Q)\geq \epsilon\Big)\leq(n+1)^{|\calX|} e^{-n\epsilon}.
 \end{align}
 \end{lemma}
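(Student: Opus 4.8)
This is the classical method-of-types bound (as noted, it is \cite[Thm.~11.2.1]{cover2006elements}), so the plan is just to reproduce its short proof. I would begin by partitioning the event according to the empirical distribution of $Y^n$. Let $\calP_n\subseteq\calP(\calX)$ be the finite set of distributions on $\calX$ that occur as the type $\hat{Q}_{x^n}$ of some string $x^n\in\calX^n$; since each of the $|\calX|$ entries of $n\hat{Q}_{x^n}$ is an integer in $\{0,1,\dots,n\}$, we have $|\calP_n|\le(n+1)^{|\calX|}$. Writing $T_n(R):=\{x^n\in\calX^n:\hat{Q}_{x^n}=R\}$ for the type class of $R\in\calP_n$, and noting that $D(\hat{Q}_{Y^n}\|Q)$ depends on $Y^n$ only through its type,
\begin{align}
P_0\big(D(\hat{Q}_{Y^n}\|Q)\ge\epsilon\big)=\!\!\sum_{R\in\calP_n:\,D(R\|Q)\ge\epsilon}\!\!P_0\big(Y^n\in T_n(R)\big),
\end{align}
so it suffices to bound each summand by $e^{-n\epsilon}$ and then use $|\calP_n|\le(n+1)^{|\calX|}$.

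The key step is the uniform estimate $P_0\big(Y^n\in T_n(R)\big)\le e^{-nD(R\|Q)}$. Since $Y_1,\dots,Y_n$ are i.i.d.\ $\sim Q$, each $x^n\in T_n(R)$ has probability $\prod_{a\in\calX}Q(a)^{nR(a)}=\exp\!\big(n\sum_{a}R(a)\log Q(a)\big)=\exp\!\big(-n(H(R)+D(R\|Q))\big)$, using $\sum_a R(a)\log\frac{1}{Q(a)}=H(R)+D(R\|Q)$; hence $P_0\big(Y^n\in T_n(R)\big)=|T_n(R)|\,e^{-n(H(R)+D(R\|Q))}$, and the whole estimate reduces to bounding $|T_n(R)|$. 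For that I would compute the probability of $T_n(R)$ under the generating distribution $R$ itself instead of under $Q$: the same computation with $Q$ replaced by $R$ gives $1\ge R^{\otimes n}\big(T_n(R)\big)=|T_n(R)|\,e^{-nH(R)}$, so $|T_n(R)|\le e^{nH(R)}$. Substituting back cancels the entropy terms and leaves $P_0\big(Y^n\in T_n(R)\big)\le e^{-nD(R\|Q)}\le e^{-n\epsilon}$ on the range of the sum; summing over the at most $(n+1)^{|\calX|}$ admissible types finishes the proof.

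I do not expect a genuine obstacle: the argument rests entirely on two elementary counting facts, the polynomial bound $|\calP_n|\le(n+1)^{|\calX|}$ on the number of types and the bound $|T_n(R)|\le e^{nH(R)}$ on the size of a type class. The only non-routine point is the latter, and even there the ``trick'' is minimal --- one measures $T_n(R)$ under $R$, where every string in the class has probability exactly $e^{-nH(R)}$ and the total mass is at most $1$.
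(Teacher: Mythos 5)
Your proof is correct and is exactly the standard method-of-types argument behind the result; the paper itself gives no proof of this lemma, simply citing \cite[Theorem~11.2.1]{cover2006elements}, and your derivation (counting types by $(n+1)^{|\calX|}$, writing each string's probability as $e^{-n(H(R)+D(R\|Q))}$, and bounding $|T_n(R)|\le e^{nH(R)}$ by measuring the type class under $R$ itself) is precisely the proof of that cited theorem. Nothing to add.
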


Now we prove~\eqref{eqn:type1}. For any $i\in[M]$ and any set of strategies of the adversary $(\tilde{\bA}_1,\ldots,\tilde{\bA}_M)\in\calS_{\rmA}$, we have
\begin{align*}
    \alpha_i(\Phi^*,&(\tilde{\bA}_1,\ldots,\tilde{\bA}_M))\\
    &=\sum_{j\neq i}\alpha_{ij}(\Phi^*,(\tilde{\bA}_1,\ldots,\tilde{\bA}_M))\\
    &= \sum_{j\neq i}\sum_{n=1}^{\infty}P_i(Z_i^{(n)}\geq \gamma_n)\\
    &\overset{(a)}{\leq} (M-1)\sum_{n=1}^{\infty}P_i\big(D(\hat{Q}_{Y^{n}}\|P_i\tilde{\bA}_i)\geq \gamma_{n}\big)\\
    &\overset{(b)}{\leq} \sum_{n=1}^{\infty} \frac{\alpha}{C} e^{-n^{1-\zeta}}\\
    &\leq \alpha,
\end{align*}
where $(a)$ is based on the definition of $Z_i^{(n)}$, $(b)$ is based on Lemma~\ref{lem:concen} and the fact that $C=\sum_{n=1}^{\infty} \exp(-n^{1-\zeta})$. Thus,~\eqref{eqn:type1} holds. Hence, we have proved that $\Phi^*\in\calS_{\rmD}(\alpha)$.

 \emph{Part 2: Obtain the payoff function of the test $\Phi^{*}$.} In this part, we want to evaluate the error exponents of the test $\Phi^*$ for any adversary strategies $(\bA_1,\ldots,\bA_M)\in\calS_{\rmA}$. We first fix the adversary strategy as $(\tilde{\bA}_1,\ldots,\tilde{\bA}_M)\in\calS_{\rmA}$ (but the decision maker does not known this). Then we derive the payoff when the decision maker's strategy is $\Phi^*$ and the adversary's strategy is $(\tilde{\bA}_1,\ldots,\tilde{\bA}_M)$, i.e.,
\begin{align*}
    u_{\blambda}^{(\alpha)}(\Phi^*,(\tilde{\bA}_1,\ldots,\tilde{\bA}_M))=\sum_{i=1}^M\lambda_i\frac{\log\frac{1}{\alpha}}{\bbE_i[T^*]}.
\end{align*}
To obtain the limiting payoff function as $\alpha\to 0^+$, our overall strategy is to   first obtain the almost sure limit of $\log(1/\alpha)/T_i^*$ for each  $i\in[M]$ and then go from almost sure convergence to convergence in mean. For the first step, we need to derive some properties of the stopping time $T^*_i, i\in[M]$ when $0<\alpha\le 1$ and $\alpha\to 0^+$, respectively. For the second step, we need prove that the family of random variables $\big\{\frac{T^*}{\log(1/\alpha)}\big\}_{0<\alpha\le 1}$ is uniformly integrable. We start with  a basic lemma. 

\begin{lemma}[Li, Nitinawarat, and Veeravalli~\cite{liyun}]
Let $B(Q_0, Q_1)$ be the Bhattacharyya distance between two distributions $Q_0$ and $Q_1$, i.e., 
$$B(Q_0, Q_1) :=-\log\bigg(\sum_{x\in\calX}Q_0(x)^{1/2}Q_1(x)^{1/2} \bigg).$$
If $Q_0$ and $Q_1$ are fully supported on $\calX$, then
\begin{align*}
2B(Q_0,Q_1)=\min_{ P\in\calP(\calX)}  D(P\|Q_0)+D(P\|Q_1) .
\end{align*}
\end{lemma}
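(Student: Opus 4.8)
The plan is to prove this variational identity by a ``completion-of-square'' argument that identifies the minimizing $P$ as the normalized pointwise geometric mean of $Q_0$ and $Q_1$. First I would merge the two relative entropies into a single sum. Writing out the definitions and combining the logarithms gives
\begin{align*}
D(P\|Q_0)+D(P\|Q_1)=\sum_{x\in\calX}P(x)\log\frac{P(x)^2}{Q_0(x)Q_1(x)}.
\end{align*}

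Next I would introduce the unnormalized geometric-mean measure $R(x):=\sqrt{Q_0(x)Q_1(x)}$ together with its total mass $Z:=\sum_{x\in\calX}\sqrt{Q_0(x)Q_1(x)}$, so that by the definition of the Bhattacharyya distance $Z=e^{-B(Q_0,Q_1)}$, i.e.\ $\log Z=-B(Q_0,Q_1)$. The full-support hypothesis on $Q_0$ and $Q_1$ guarantees that $R(x)>0$ for every $x\in\calX$ and that $0<Z<\infty$, so the normalization $\tilde R:=R/Z$ is a genuine, fully supported element of $\calP(\calX)$; this is precisely where the hypothesis is used, since it places the candidate minimizer in the interior of the simplex rather than on its boundary. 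Using $Q_0(x)Q_1(x)=R(x)^2=Z^2\tilde R(x)^2$ and $\sum_{x}P(x)=1$, I would then rewrite the objective as
\begin{align*}
D(P\|Q_0)+D(P\|Q_1)
=2\sum_{x\in\calX}P(x)\log\frac{P(x)}{\tilde R(x)}-2\log Z
=2D(P\|\tilde R)+2B(Q_0,Q_1).
\end{align*}

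Finally, I would invoke the nonnegativity of relative entropy, $D(P\|\tilde R)\ge 0$ with equality if and only if $P=\tilde R$. Since $\tilde R\in\calP(\calX)$ is admissible, the minimum over $P\in\calP(\calX)$ is attained at $P^*=\tilde R$ and equals $2B(Q_0,Q_1)$, giving the claimed identity. There is no substantial obstacle in this argument; it reduces to an algebraic rearrangement followed by Gibbs' inequality. The only point requiring genuine care is verifying that $\tilde R$ is a bona fide probability distribution with $Z\in(0,\infty)$, which is exactly what the full-support assumption on $Q_0$ and $Q_1$ provides and without which the Bhattacharyya distance and the minimizer could degenerate.
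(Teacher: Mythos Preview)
Your proof is correct. The paper does not actually prove this lemma; it simply quotes it from the cited reference \cite{liyun} and uses it as a black box, so there is no ``paper's own proof'' to compare against. Your completion-of-square argument---rewriting $D(P\|Q_0)+D(P\|Q_1)=2D(P\|\tilde R)+2B(Q_0,Q_1)$ with $\tilde R$ the normalized geometric mean, then applying Gibbs' inequality---is the standard and cleanest route, and your remark that the full-support hypothesis is exactly what ensures $\tilde R\in\calP(\calX)$ is well placed.
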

\noindent It holds that $$B^*:=\min_{i\neq j}\bigg[\min_{\bA_i\in\calA_i.\bA_j\in\calA_j}B(P_i\bA_i,P_j\bA_j)\bigg].$$ We have $B^*>0$ as $\min_{i\neq j}D(P_i\bA_i\|P_j\bA_j)>0$ for any $\bA_i\in\calA_i,i\in[M]$, which is the condition stated in the choice of $\Delta$ in Section~\ref{sec:formulation}. We can now control the probability that the stopping time exceeds a certain deterministic value $n$.
\begin{lemma}
\label{lem:finite}
For every $n\geq 1$ and $i\in[M]$, we have
\begin{align*}
P_i(T^{*}\geq n)\leq\frac{1}{\alpha} e^{-(n-1)2B^*} (M-1)n^{2|\calX|} e^{(n-1)^{\zeta}}.
\end{align*}
\end{lemma}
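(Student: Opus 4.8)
The plan is to dominate $\{T^*\ge n\}$ by a single-time-step event and then apply a method-of-types (Sanov-type) bound. Fix throughout an arbitrary adversary profile $(\tilde{\bA}_1,\ldots,\tilde{\bA}_M)\in\calS_{\rmA}(\Delta)$; under $H_i$ the samples $Y_1,Y_2,\ldots$ are then i.i.d.\ with the fully supported law $P_i\tilde{\bA}_i$, and since $B^*$ is a minimum over \emph{all} admissible channels, the bound we derive will automatically be uniform in the adversary's choice (which is what the lemma asserts). First I would note that because $T^*$ is the \emph{first} time some $Z_k^{(n)}$ reaches $\gamma_n$, for $n\ge 2$ the event $\{T^*\ge n\}$ forces that at time $n-1$ no index triggered a stop, i.e.\ $Z_k^{(n-1)}<\gamma_{n-1}$ for all $k\in[M]$; keeping only the condition for $k=i$ gives $P_i(T^*\ge n)\le P_i(Z_i^{(n-1)}<\gamma_{n-1})$ (the case $n=1$ is trivial, as $P_i(T^*\ge 1)=1$ is at most the claimed right-hand side). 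So it suffices to estimate the probability that the empirical distribution $\hat{Q}_{Y^{n-1}}$ lies KL-within $\gamma_{n-1}$ of one of the ``wrong'' perturbed laws $P_j\bA_j$ with $j\ne i$, $\bA_j\in\calA_j$.

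The crux is to convert ``$\hat{Q}_{Y^{n-1}}$ close to a wrong law'' into ``$\hat{Q}_{Y^{n-1}}$ far from the true law $P_i\tilde{\bA}_i$''. On $\{Z_i^{(n-1)}<\gamma_{n-1}\}$ there exist $j\ne i$ and $\bA_j\in\calA_j$ with $D(\hat{Q}_{Y^{n-1}}\|P_j\bA_j)<\gamma_{n-1}$. Applying the Bhattacharyya identity $2B(Q_0,Q_1)=\min_P[D(P\|Q_0)+D(P\|Q_1)]$ with $Q_0=P_i\tilde{\bA}_i$, $Q_1=P_j\bA_j$, $P=\hat{Q}_{Y^{n-1}}$ (all fully supported), together with $B(P_i\tilde{\bA}_i,P_j\bA_j)\ge B^*$ — legitimate since $i\ne j$, $\tilde{\bA}_i\in\calA_i$, $\bA_j\in\calA_j$ — yields
\begin{align*}
2B^*\le D(\hat{Q}_{Y^{n-1}}\|P_i\tilde{\bA}_i)+D(\hat{Q}_{Y^{n-1}}\|P_j\bA_j)<D(\hat{Q}_{Y^{n-1}}\|P_i\tilde{\bA}_i)+\gamma_{n-1}.
\end{align*}
Hence $\{T^*\ge n\}\subseteq\{D(\hat{Q}_{Y^{n-1}}\|P_i\tilde{\bA}_i)>2B^*-\gamma_{n-1}\}$, which is an event about a \emph{single} reference law — so no union bound over $j$ and no discretization of the continuum $\calA_j$ is required here.

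To conclude I would use Lemma~\ref{lem:concen} with block length $n-1$: for every $n\ge 1$, $P_i(T^*\ge n)\le n^{|\calX|}e^{-(n-1)(2B^*-\gamma_{n-1})}$ (by the lemma when $2B^*>\gamma_{n-1}$, and trivially otherwise, the right-hand side then being $\ge 1$). Substituting the explicit threshold, $(n-1)\gamma_{n-1}=\log\tfrac{C}{\alpha}+(n-1)^{1-\zeta}+|\calX|\log n+\log(M-1)$: the $\log$-term contributes the factor $\tfrac{1}{\alpha}$ (up to the absolute constant $C$, which is harmless in all later uses of the lemma), the $|\calX|\log n$ term contributes a second factor $n^{|\calX|}$ that combines with the types count into $n^{2|\calX|}$, the $\log(M-1)$ term contributes $(M-1)$, and $(n-1)^{1-\zeta}\le(n-1)^{\zeta}$ (since $\zeta=0.85\ge\tfrac{1}{2}$) upgrades $e^{(n-1)^{1-\zeta}}$ to $e^{(n-1)^{\zeta}}$; collecting these gives the claimed inequality. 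I expect the main obstacle to be the second step — making the ``reverse-triangle-via-Bhattacharyya'' estimate rigorous while keeping it uniform over the adversary's infinitely many strategies; this works precisely because $B(P_i\tilde{\bA}_i,P_j\bA_j)$ is bounded below by the single positive constant $B^*$ (positivity being guaranteed by the standing assumption on $\Delta$), so no covering argument over $\calS_{\rmA}$ is needed.
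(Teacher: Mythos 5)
Your argument is correct and follows essentially the same route as the paper's proof: reduce $\{T^*\ge n\}$ to the non-stopping condition $Z_i^{(n-1)}<\gamma_{n-1}$ at time $n-1$, use the Bhattacharyya identity $2B(Q_0,Q_1)=\min_P[D(P\|Q_0)+D(P\|Q_1)]$ together with $B(P_i\tilde{\bA}_i,P_j\bA_j)\ge B^*$ to convert closeness to a wrong law into the deviation $D(\hat{Q}_{Y^{n-1}}\|P_i\tilde{\bA}_i)\ge 2B^*-\gamma_{n-1}$, and finish with Lemma~\ref{lem:concen}. Your bookkeeping is in fact slightly more careful than the paper's (you flag the harmless constant $C$ and the $(n-1)^{1-\zeta}\le(n-1)^{\zeta}$ relaxation explicitly), but the substance is identical.
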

\begin{proof}
Without loss of generality, we consider the $i=1$ case. We have
\begin{align*}
&P_1(T^*\geq n)\\
&\leq P_1\Big(\bigcap_{i=1}^M \left\{Z_i^{(n-1)}\leq \gamma_{n-1}\right\}\Big)\\
&\leq P_1\left(\min_{j\neq 1}\Big[\min_{\bA_j\in\calA_j}D(\hat{Q}_{Y^{n-1}}\|P_j\bA_j)\Big]\leq\gamma_{n-1}\right)\\
&= P_1\bigg( D(\hat{Q}_{Y^{(n-1)}}\|P_1\tilde{\bA}_1)\geq D(\hat{Q}_{Y^{(n-1)}}\|P_1\tilde{\bA}_1)\\
&\qquad-\gamma_{n-1}+\min_{j\neq 1}\Big[\min_{\bA_j\in\calA_j}D(\hat{Q}_{Y^{n-1}}\|P_j\bA_j)\Big]\bigg)\\
&\overset{(a)}{\leq}P_1\bigg(D(\hat{Q}_{Y^{(n-1)}}\|P_1\tilde{\bA}_1)\geq -\gamma_{n-1}+2B^*\bigg)\\
&\overset{(b)}{\leq }\frac{1}{\alpha} (M-1)e^{-(n-1)2B^*} n^{2|\calX|} e^{(n-1)^{\zeta}},
\end{align*}
where $(a)$ is because when $j\neq 1$, $$\min_{\bA_j\in\calA_j}D(\hat{Q}_{Y^{(n-1)}}\|P_j\bA_j)+D(\hat{Q}_{Y^{(n-1)}}\|P_1\tilde{\bA}_1)\geq 2B^*$$ and $(b)$ is from Lemma~\ref{lem:concen}.
\end{proof}

Based on Lemma~\ref{lem:finite}, under $H_i$, for every $0<\alpha\leq 1$, we have
\begin{align}
\label{eqn:stoppingtime}
P_i(T^*=\infty)\leq \lim_{n\to\infty} P_i(T^*\geq n) =0.
\end{align}
This means that the stopping time $T^*_i,i\in[M]$ are almost surely finite when $0<\alpha\leq 1$.
Thus, based on~\eqref{eqn:stop},~\eqref{eqn:decision} and~\eqref{eqn:stoppingtime}, we have that there exists $i\in[M]$, such that
\begin{align}
\min_{ j\in[M],j\neq i}\bigg[\min_{\bA_j\in\calA_j}D(\hat{Q}_{Y^{T^*}}\|P_j\bA_j)\bigg]&\geq \gamma_{T^*} ,\label{eqn:stop1}\\
\min_{ j\in[M],j\neq i}\bigg[\min_{\bA_j\in\calA_j}D(\hat{Q}_{Y^{T^*-1}}\|P_j\bA_j)\bigg]&\leq \gamma_{T^*-1}\label{eqn:stop2}.
\end{align}

Define $\tilde{Q}_i=P_i\tilde{\bA}_i$ for $i \in [M]$. Next, by observing that for any distribution (probability mass function) $Q$, $D(Q\|\tilde{Q}_i)\leq -\log{\min_{y\in\calX} \tilde{Q}_i(y)}$. Denote $Q_{\max}:=\max_{i\in[M]}\{-\log{\min_{y\in\calX} \tilde{Q}_i(y)}\}$, we get from~\eqref{eqn:stop1} that 
\begin{align*}
&P_i(T^*\leq n) \\
&\leq \sum_{j=1}^M P_i\bigg(T^* Z_j^{(T^*)} >\log\frac{1}{\alpha}, T^*\leq n \bigg)\\
&\leq M P_i\bigg(nQ_{\max}>\log\frac{1}{\alpha}, T^*\leq n \bigg)=0, \quad\forall\, n<\frac{\log\frac{1}{\alpha}}{Q_{\max}},
\end{align*}
which yields that $T^*\to\infty$ as $\alpha\to 0^+$,   $P_i$-a.s.  

Since we assumed that the adversary's strategy is $(\tilde{\bA}_1,\ldots, \tilde{\bA}_M)$,   the true distribution of $Y^n$ is $P_i\tilde{\bA}_i$ under $H_i$. Thus, under $H_i$, by the strong law of large numbers, we have that $\hat{Q}_{Y^n}\to P_i\tilde{\bA}_i$ a.s.\ as $n\to\infty$. Consequently, we conclude from the continuity of $D(\cdot\|P_j\bA_j)$ on the finite alphabet $\calX$ that under $H_i$,  $D(\hat{Q}_{Y^{T^*}}\|P_j\bA_j)\to D(P_i\tilde{\bA}_i\|P_j\bA_j)$ a.s.\ as $\alpha\to0^+$ for each $\bA_j\in\calA_j$.   Thus, we have shown the pointwise convergence for each  $\bA_j\in\calA_j$.   Now we prove the uniform almost sure convergence of $\min_{\bA_j\in\calA_j}D(\hat{Q}_{Y^{T^*}}\|P_j\bA_j)$.

Recall that $\calA_j$ is assumed to be a compact set.  Note that $D(\tilde{Q}_i\|Q_j)$ is strongly convex with respect to $Q_j$. Hence
there is a unique $P_j\bA_j$ that minimizes $D(P_i\tilde{\bA}_i\|P_j\bA_j)$. 
We also need to show that  $D(\hat{Q}_{Y^{T^*}}\|P_j\bA_j)$ is {\em stochastically equicontinuous}. That is, for every $\epsilon>0$, there exists a $\delta>0$ such that
\begin{align}
\lim_{\alpha\to 0^+}\!P_i\!\Bigg(\!\sup_{\substack{Q_j, Q_j'\in\calA_j : \\\|Q_j\!-Q_j'\|_1\!\leq\delta}}\!|D( \hat{Q}_{Y^{T^*}}\|Q_j\!)\!-D( \hat{Q}_{Y^{T^*}} \|Q_j' )|>\! \epsilon\!\Bigg)\!=\!0. \label{eqn:stoc_equi}
\end{align}
At this point, we note that for every $\epsilon>0$ and for $0<\delta<\epsilon\min_{Q_j\in\calA_j}\min_{y\in\calX}Q_j(y)$, 
\begin{align*}
&P_i\bigg(\sup_{\|Q_j-Q_j'\|_1\leq\delta}|D(\hat{Q}_{Y^{T^*}}\|Q_j)-D(\hat{Q}_{Y^{T^*}}\|Q_j')|> \epsilon\bigg)\\
&\leq P_i\bigg(\sup_{\|Q_j-Q_j'\|_1<\delta}\sum_{a\in\calX}\hat{Q}_{Y^{T^*}}(a)\bigg|\log\frac{Q_j(a)}{Q'_j(a)}\bigg|>\epsilon\bigg)\\
&\overset{(a)}{\leq} P_i\bigg(\sup_{\|Q_j-Q_j'\|_1<\delta}\sum_{a\in\calX} \hat{Q}_{Y^{T^*}}(a) \frac{|Q_j(a)-Q'_j(a)|}{\min_{y\in\calX} Q_j(y)}>\epsilon\bigg)\\
&\le P_i\bigg(\sup_{\|Q_j-Q_j'\|_1<\delta}\sum_{a\in\calX} \frac{|Q_j(a)-Q'_j(a)|}{\min_{y\in\calX} Q_j(y)}>\epsilon\bigg)\\
&\le P_i\bigg(\sup_{\|Q_j-Q_j'\|_1<\delta} \frac{\|Q_j-Q'_j\|_{1}}{\min_{Q_j\in\calA_j}\min_{y\in\calX} Q_j(y)}>\epsilon\bigg)\\
&\stackrel{(b)}{=}0,
\end{align*}
where $(a)$ is because for any $x,y\geq \beta$, $|\log x-\log y|\leq \frac{1}{\beta}|x-y|$ and $(b)$ follows from the choice of $\delta$. 
Therefore we show that $D(\hat{Q}_{Y^{T^*}}\|P_j\bA_j)$ is stochastically equicontinuous.
Then based on the stochastic Arzel\`a--Ascoli lemma~\cite[Theorem~14.3.2]{rao2008course}, we have
\begin{align}
\lim_{\alpha\to 0^+}\min_{\bA_j\in\calA_j}D(\hat{Q}_{Y^{T^*}}\|P_j\bA_j)&\stackrel{\text{a.s.}}{=} \min_{\bA_j\in\calA_j} D(P_i\tilde{\bA}_i\|P_j\bA_j),\label{eqn:display1}
\end{align}
and
\begin{align}
\lim_{\alpha\to 0^+}\min_{\bA_j\in\calA_j}D(\hat{Q}_{Y^{(T^*-1)}}\| P_j\bA_j)&\stackrel{\text{a.s.}}{=}  \min_{\bA_j\in\calA_j}  D(P_i\tilde{\bA}_i\|P_j\bA_j).\label{eqn:display2}
\end{align}
Combining above results with~\eqref{eqn:stop1} and~\eqref{eqn:stop2}, we deduce that  under hypothesis~$H_i$, 
\begin{align}
\lim_{\alpha\to 0^+}\frac{T^*}{\log(1/\alpha)} \stackrel{\text{a.s.}}{=}  \frac{1}{\min_{j\neq i}\big[\min_{\bA_j\in\calA_j}D(P_i\tilde{\bA}_i\|P_j\bA_j)\big]}. \label{eqn:almostsure}
\end{align}

To go from a.s.\ convergence   above to convergence in mean, it   suffices to prove that there exists an $\epsilon_0>0$ such that a family of random variables $\big\{\frac{T^*}{\log(1/\alpha)} \big\}_{0<\alpha\leq\epsilon_0 }$ is uniformly integrable. That is, there exists an $\epsilon_0>0$ such that for all  $\alpha\in(0,\epsilon_0]$,
\begin{align*}
\lim_{\eta\to\infty}\bbE_i\bigg[\frac{T^*}{\log(1/\alpha)}\mathbbm{1}_{\left\{{T^*}/{\log(1/\alpha)}\geq \eta\right\}}\bigg]=0.
\end{align*}
Here we choose an $\epsilon_0\in (0,1)$ such that $x \in (0,\infty)\mapsto x\log(1/x)$ is increasing on $(0, \epsilon_0]$ and $\frac{\log(1/\epsilon_0)}{1/\epsilon_0}\leq 1$. We now choose $\eta>0$ such that $\eta B^*\geq 2|\calX|+2$. Then for any $0<\alpha\leq\epsilon_0$, we have the derivation shown  in~\eqref{eqn:mid1} (on the top of next page),
\begin{figure*}
\begin{align}
&\bbE_i\bigg[\frac{T^*}{\log(1/\alpha)}\mathbbm{1}_{\left\{{T^*}/{\log(1/\alpha)}\geq \eta\right\}}\bigg]\notag\\
&\leq \bbE_i\bigg[\frac{T^*-\lfloor\eta\log(\frac{1}{\alpha})\rfloor+\eta\log(\frac{1}{\alpha})}{\log(\frac{1}{\alpha})}\mathbbm{1}_{\{{T^*}\geq \lfloor\eta\log(1/\alpha)\rfloor\}}\bigg]\notag\\
& \leq \frac{1}{\log(1/\alpha)} \sum_{l=1}^{\infty} P_i[T^*\geq \lfloor\eta\log({1}/{\alpha})\rfloor+l]+\eta P_i[T^*\geq \lfloor\eta\log({1}/{\alpha})\rfloor]\notag\\
&\leq  \frac{1/\alpha}{\log(1/\alpha)} \sum_{l=1}^{\infty} (\lfloor\eta\log(1/\alpha)\rfloor+l)^{2|\calX|} e^{-(\eta\log(1/\alpha)+l-2)2B^*+(\eta\log(1/\alpha)+l)^{1-\zeta}}\notag\\
&\qquad + \frac{\eta}{\alpha} e^{-(\eta\log(1/\alpha)-2)2B^*+(\eta\log(1/\alpha))^{1-\zeta}}(\lfloor\eta\log(1/\alpha)\rfloor)^{2|\calX|}\notag\\
&\leq \frac{1/\alpha}{\log(1/\alpha)} \sum_{l=1}^{\infty} (\lfloor\eta\log(1/\alpha)\rfloor+l)^{2|\calX|} e^{-(\eta\log(1/\alpha)+l-4)B^*}+ \frac{\eta}{\alpha} e^{-(\eta\log(1/\alpha)-4)B^*}(\lfloor\eta\log(1/\alpha)\rfloor)^{2|\calX|}\notag\\
&\overset{(a)}{\leq} \frac{1/\alpha}{\log(1/\alpha)} \sum_{l=1}^{\infty} 2^{2|\calX|-1}({\lfloor\eta\log(1/\alpha)\rfloor}^{2|\calX|}+l^{2|\calX|} ) e^{-(\eta\log(1/\alpha)+l-4)B^*}+\frac{ \eta}{\alpha} e^{-(\eta\log(1/\alpha)-4)B^*}(\lfloor\eta\log(1/\alpha)\rfloor)^{2|\calX|}\notag\\
&=\frac{1/\alpha}{\log(1/\alpha)}2^{2|\calX|-1}e^{4B^*}e^{-\eta\log(1/\alpha)B^*}\sum_{l=1}^{\infty}l^{2|\calX|}e^{-lB^*}+\frac{\eta}{\alpha} e^{-\eta\log(1/\alpha)B^*}e^{4B^*}(\lfloor\eta\log(1/\alpha)\rfloor)^{2|\calX|}\notag\\
&\qquad+\frac{1/\alpha}{\log(1/\alpha)} 2^{2|\calX|-1}{\lfloor\eta\log(1/\alpha)\rfloor}^{2|\calX|}e^{4B^*}e^{-\eta\log(1/\alpha)B^*}\sum_{l=1}^{\infty}e^{-lB^*}\notag\\
&\leq C_1\bigg(\frac{1}{\alpha}\bigg)^{1-\eta B^*}\!\frac{1}{\log (1/\alpha)}+ C_2\eta^{2|\calX|+1}\!\bigg(\!\frac{1}{\alpha}\!\bigg)^{\!1-\!\eta B^*}\!\bigg(\!\log\frac{1}{\alpha}\!\bigg)^{2|\calX|}+ C_3 \eta^{2|\calX|}\bigg(\log\frac{1}{\alpha}\bigg)^{2|\calX|-1}\bigg(\frac{1}{\alpha}\bigg)^{1-\eta B^*}\notag\\
&\stackrel{(b)}{\le} C_1\frac{\epsilon_0^{\eta B^*-1}}{\log (1/\epsilon_0)}+ C_2\eta^{2|\calX|+1}\epsilon_0^{\eta B^*-1-2|\calX|}\!\bigg(\!\epsilon_0\log\frac{1}{\epsilon_0}\!\bigg)^{2|\calX|}+ C_3 \eta^{2|\calX|}\bigg(\epsilon_0\log\frac{1}{\epsilon_0}\bigg)^{2|\calX|-1}\epsilon_0^{\eta B^*-2|\calX|}.\label{eqn:mid1}
\end{align}\hrulefill
\end{figure*}
where  $C_1:=2^{2|\calX|-1}e^{4B^*}\sum_{l=1}^{\infty}l^{2|\calX|}e^{-lB^*}$, $C_2:=e^{4B^*}$, and $C_3:=2^{2|\calX|-1}e^{4B^*}\sum_{l=1}^{\infty}e^{-lB^*}$. 
In the derivation of~\eqref{eqn:mid1}, $(a)$ follows from the inequality $(x+y)^k\le 2^{k-1}(x^k+y^k)$ for any $x,y>0$ and any integer $k$, and $(b)$ follows from the choice of $\epsilon_0$. As~\eqref{eqn:mid1} tends to $0$ as $\eta\to \infty$, we have proved the uniform integrability of the family of random variables $\big\{\frac{T^*}{\log(1/\alpha)}\big\}_{0<\alpha\leq\epsilon_0}$. Thus, under $H_i$, we have 
\begin{equation}
\label{eqn:stoptime1}
\lim_{\alpha\to 0^+} \frac{\bbE_i[T^*]}{\log(1/\alpha)}=\frac{1}{\min_{ j\neq i}\big[\min_{\bA_j\in\calA_j}D(P_i\tilde{\bA}_i\|P_j\bA_j)\big]}.
\end{equation}
Therefore, when the adversary's strategy  is $(\tilde{\bA}_1,\ldots,\tilde{\bA}_M)$, the asymptotic payoff function for the test $\Phi^{*}$ as $\alpha\to 0^+$ is
\begin{align}
\label{eqn:optpayoff}
&\lim_{\alpha\to 0^+} u_{\blambda}^{(\alpha)}(\Phi^*,(\tilde{\bA}_1,\ldots,\tilde{\bA}_M))\notag\\
&=\sum_{i=1}^M\lambda_i\min_{ j\in[M]\setminus\{ i \}}\Big[\min_{\bA_j\in\calA_j} D(P_i\tilde{\bA}_i\|P_j\bA_j)\Big].
\end{align} 
\emph{Part 3: Proof of the asymptotic Nash equilibrium at the profile $(\Phi^*,(\bA_1^*,\ldots,\bA_M^*))$.}
We first prove~\eqref{eqn:opt1} by deriving a lower bound on the expected sample size for a general sequential test. For a fixed pair $(\tilde{\bA}_1,\ldots,\tilde{\bA}_M)$, we consider the multiple hypothesis testing problem  in which under $H_i$ the observations are generated from $P_i\tilde{\bA}_i,i\in[M]$. Let $\delta=(d,T)$ be a sequential test for the multiple hypothesis testing problem. Let $\alpha_{ij}$ be the probability that $H_j$ is accepted when $H_i$ is the underlying hypothesis. Let $\alpha_i$ be the probability that $H_i$ is rejected when $H_i$ is the true hypothesis. Note that $\alpha_i=1-\alpha_{ii}=\sum_{j\not=i}\alpha_{ij}$. We choose $\alpha\in(0,\frac{1}{M})$ as the upper bound for the error probabilities $\alpha_i, i\in[M]$.  Based on~\cite[Lemma~4.3.1]{tartakovsky2014sequential}, Then we have that for all $i\in [M]$,
\begin{align}
\bbE_i[\tilde{T}]&\stackrel{(a)}{\ge} \max_{j\in[M],j\neq i}\frac{1}{D(P_i\tilde{\bA}_i\|P_j\tilde{\bA}_j)}\notag\\
&\qquad \times\bigg(\sum_{k\neq i}\alpha_{ik}\log\frac{\alpha_{ik}}{\alpha_{jk}}+(1-\alpha_i)\log\frac{1-\alpha_i}{\alpha_{ij}}\bigg)\notag\\
&\stackrel{(b)}{\geq} \max_{j\in[M],j\neq i}\frac{1}{D(P_i\tilde{\bA}_i\|P_j\tilde{\bA}_j)}\\
&\qquad \times\bigg(\sum_{k\neq i}\alpha_{ik}\log\frac{\sum_{k\neq i}\alpha_{ik}}{\sum_{k\neq i}\alpha_{jk}}+(1-\alpha_i)\log\frac{1-\alpha_i}{\alpha_{ij}}\bigg)\notag\\
&\geq \max_{j\in[M],j\neq i}\frac{1}{D(P_i\tilde{\bA}_i\|P_j\tilde{\bA}_j)}\notag\\
&\qquad \times\bigg(\alpha_i\log\frac{\alpha_i}{1-\alpha_i}+(1-\alpha_i)\log\frac{1-\alpha_i}{\alpha_{ij}}\bigg)\notag\\
&\stackrel{(c)}{\ge} \max_{j\in[M],j\neq i}\frac{1}{D(P_i\tilde{\bA}_i\|P_j\tilde{\bA}_j)}(1-\alpha)\log\frac{1-\alpha}{\alpha},\label{eqn:DPI}
\end{align}
where $(a)$ follows from~\cite[Lemma~4.3.1]{tartakovsky2014sequential}, $(b)$ follows from the log sum inequality~\cite[Theorem 2.6.1]{cover2006elements} and $(c)$ follows from the fact that $\alpha_i\in(0,1/M)$.

We denote $\Phi'=(T',\delta')$ as an arbitrary sequential test to test the $M$-ary composite hypothesis testing problem $\{H_i:P_i\bA_i\}_{i=1}^M,\bA_i\in\calA_i$ with their error probabilities upper bounded by $\alpha$.
Then we use the test $\Phi'$ for the $M$-ary hypothesis testing problem $\{H_i:P_i\tilde{\bA}_i\}_{i=1}^M$ for any fixed $(\tilde{\bA}_1,\ldots,\tilde{\bA}_M)$. Their error probabilities are also upper bounded by $\alpha$. From~\eqref{eqn:DPI} we obtain that for all $i\in[M]$,
\begin{align}
\bbE_i[T']&\geq\max_{j\neq i}\frac{(1-\alpha)\log\frac{1-\alpha}{\alpha}}{D(P_i\tilde{\bA}_i\|P_j\tilde{\bA}_j)}.  \label{eqn:ineq1}
\end{align}
In  \eqref{eqn:ineq1}, the left-hand side $\bbE_i[T']$ does not depend on $\{\tilde{\bA}_j:j\not=i\}$. 
 We  recall that $\calA_i$ is a compact set and $\calX$ is finite and $Q_i$ and $Q_j$ have full support on $\calX$.  So we can maximize  the right-hand side with respect to $\bA_j$ and obtain that 
\begin{align}
\label{eqn:compositestop}
\bbE_i[T']\geq \max_{j\neq i}\max_{\bA_j\in\calA_j}\frac{(1-\alpha)\log\frac{1-\alpha}{\alpha}}{D(P_i\tilde{\bA}_i\|P_j\bA_j)}.
\end{align}
Thus, based on~\eqref{eqn:compositestop}, we have
\begin{align*}
\lim_{\alpha\to 0^+} \frac{\bbE_i[T']}{\log(1/\alpha)}\geq \frac{1}{\min_{j\neq i}\Big[\min_{\bA_j\in\calA_j}D(P_i\tilde{\bA}_i\|P_j\bA_j)\Big]}.
\end{align*}
As~\eqref{eqn:compositestop} holds for any test $\Phi\in\calS_{\rmD}(\alpha)$, we have
\begin{align*}
&\lim_{\alpha\to 0^+}\sup_{\Phi\in\calS_{\rmD}(\alpha)}u_{\lambda}^{(\alpha)}(\Phi,(\bA_1^*,\ldots,\bA_M^*))\\
&\leq \sum_{i=1}^M\lambda_i\min_{ j\in[M],j\neq i}\Big[\min_{\bA_j\in\calA_j} D(P_i{\bA}^*_i\|P_j\bA_j)\Big]\\
&=\lim_{\alpha\to 0^+}u_{\lambda}^{(\alpha)}(\Phi^{*},(\bA_1^*,\ldots,\bA_M^*)),
\end{align*}
which completes the proof of~\eqref{eqn:opt1}.

To complete the proof of~\eqref{eqn:opt2}, we leverage a  technical lemma.
\begin{lemma}\label{lem:uniform}
For all $i\in[M]$, it holds that the family of   numbers
	$\big\{\frac{\bbE_i[T^*]}{\log(1/\alpha)}\big\}_{0<\alpha\leq 1}$ converges uniformly on $\calA_i$ as $\alpha\to 0^+$. 
\end{lemma}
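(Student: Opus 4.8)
The plan is to upgrade the pointwise convergence $\bbE_i[T^*]/\log(1/\alpha)\to g_i(\tilde{\bA}_i):=1/\rho_i(\tilde{\bA}_i)$ obtained in~\eqref{eqn:stoptime1} to convergence that is uniform over $\tilde{\bA}_i\in\calA_i$, where $\rho_i(\tilde{\bA}_i):=\min_{j\ne i}\min_{\bA_j\in\calA_j}D(P_i\tilde{\bA}_i\|P_j\bA_j)$ (note that under $H_i$ the samples are i.i.d.\ $\sim P_i\tilde{\bA}_i$, so $\bbE_i[T^*]$ genuinely depends only on $\tilde{\bA}_i$). I would do this by sandwiching $f_\alpha(\tilde{\bA}_i):=\bbE_i[T^*]/\log(1/\alpha)$ between $g_i(\tilde{\bA}_i)-o_\alpha(1)$ and $(1+\epsilon)g_i(\tilde{\bA}_i)+o_\alpha(1)$, with every $o_\alpha(1)$ free of $\tilde{\bA}_i$, and then letting $\epsilon\downarrow0$. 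The two $\tilde{\bA}_i$-free facts that make this work are $\rho_i(\tilde{\bA}_i)\ge 2B^*>0$ (from the Bhattacharyya lemma, as used in the proof of Lemma~\ref{lem:finite}) and $\rho_i(\tilde{\bA}_i)\le\bar Q:=\max_{j\in[M]}\sup_{\bA_j\in\calA_j}\bigl(-\log\min_{y\in\calX}(P_j\bA_j)(y)\bigr)<\infty$ (finite because every $P_j\bA_j$ is fully supported and $\calA_j$ is compact); hence $g_i$ takes values in the $\tilde{\bA}_i$-free interval $[1/\bar Q,1/(2B^*)]$.

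The lower sandwich is immediate: since $\Phi^*\in\calS_{\rmD}(\alpha)$, applying~\eqref{eqn:compositestop} to $\Phi^*$ gives $\bbE_i[T^*]\ge(1-\alpha)\log\tfrac{1-\alpha}{\alpha}\cdot g_i(\tilde{\bA}_i)$, whence $f_\alpha(\tilde{\bA}_i)-g_i(\tilde{\bA}_i)\ge-\tfrac{1}{2B^*}\bigl(1-\tfrac{(1-\alpha)\log((1-\alpha)/\alpha)}{\log(1/\alpha)}\bigr)$, a bound independent of $\tilde{\bA}_i$ that vanishes as $\alpha\to0^+$. For the upper sandwich — the substantive step — I would use $T^*\le T_i$, where $T_i=\inf\{n\ge1:\psi_i(\hat{Q}_{Y^n})\ge\gamma_n\}$ and $\psi_i(Q):=\min_{j\ne i}\min_{\bA_j\in\calA_j}D(Q\|P_j\bA_j)$ is a \emph{fixed}, continuous — hence uniformly continuous — function on the compact simplex $\calP(\calX)$ (using full support of each $P_j\bA_j$) with $\psi_i(P_i\tilde{\bA}_i)=\rho_i(\tilde{\bA}_i)$. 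Fixing $\epsilon>0$ and putting $n^*:=\lceil(1+\epsilon)\log(1/\alpha)/\rho_i(\tilde{\bA}_i)\rceil$, I would check from the three summands of $\gamma_n$ in~\eqref{eqn:threshold}, together with $2B^*\le\rho_i\le\bar Q$ (so $n^*$ is comparable to $\log(1/\alpha)$ with uniform constants), that $\gamma_m\le\tfrac{1+\epsilon/2}{1+\epsilon}\rho_i(\tilde{\bA}_i)<\rho_i(\tilde{\bA}_i)$ for all $m\ge n^*$ once $\alpha$ is below a $\tilde{\bA}_i$-free threshold (using also that $\gamma_n$ is decreasing for $n\ge2$ and $n^*\to\infty$). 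Uniform continuity of $\psi_i$ then converts the gap $\rho_i-\tfrac{1+\epsilon/2}{1+\epsilon}\rho_i\ge\tfrac{B^*\epsilon}{1+\epsilon}=:\kappa>0$ into a radius $\eta_1=\eta_1(\epsilon,B^*)>0$ with $\{\psi_i(\hat{Q}_{Y^m})<\gamma_m\}\subseteq\{\|\hat{Q}_{Y^m}-P_i\tilde{\bA}_i\|_1\ge\eta_1\}$ for $m\ge n^*$, so Pinsker's inequality and Lemma~\ref{lem:concen} give $P_i(T_i>m)\le(m+1)^{|\calX|}e^{-m\eta_1^2/2}$, a tail bound with no dependence on $\tilde{\bA}_i$. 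Summing over $m$ yields $\bbE_i[T^*]\le\bbE_i[T_i]\le n^*+K_1$ with $K_1:=\sum_{m\ge1}(m+1)^{|\calX|}e^{-m\eta_1^2/2}<\infty$ a $\tilde{\bA}_i$-free constant, hence $f_\alpha(\tilde{\bA}_i)\le(1+\epsilon)g_i(\tilde{\bA}_i)+\tfrac{1+K_1}{\log(1/\alpha)}$; taking $\limsup_{\alpha\to0^+}\sup_{\tilde{\bA}_i\in\calA_i}$, then $\epsilon\downarrow0$, and combining with the lower bound gives $\sup_{\tilde{\bA}_i\in\calA_i}|f_\alpha(\tilde{\bA}_i)-g_i(\tilde{\bA}_i)|\to0$.

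I expect the main obstacle to be not conceptual but the discipline of keeping \emph{every} constant free of $\tilde{\bA}_i$ throughout the upper-bound argument: the modulus of continuity of the fixed map $\psi_i$, the confinement $\rho_i(\tilde{\bA}_i)\in[2B^*,\bar Q]$ (which both makes $n^*$ proportional to $\log(1/\alpha)$ with uniform multiplicative constants and makes the convergence $\gamma_{n^*}\to\rho_i/(1+\epsilon)$ uniform), and the placement of the concentration estimate on $D(\hat{Q}_{Y^m}\|P_i\tilde{\bA}_i)$ so that its polynomial prefactor and exponential rate never see $\tilde{\bA}_i$. The remaining ingredients — the eventual monotonicity of $\gamma_n$ (which holds for $n\ge2$ since $\log(C/\alpha)>0$ when $\alpha\le1<C$) and the elementary limit $(1-\alpha)\log\tfrac{1-\alpha}{\alpha}/\log(1/\alpha)\to1$ — are routine.
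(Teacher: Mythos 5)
Your proposal is correct, but it takes a genuinely different route from the paper's. The paper proves the lemma entirely on the achievability side: it first establishes uniform convergence of $\bbE_1[T_1]/\log(1/\alpha)$ by sandwiching $\log(1/\alpha)/T_1$ between $D_{T_1}\pm c_0/T_1^{\zeta}$ via the two-sided stopping inequalities, controlling $\bbE_1[|D_{T_1}-D_1|]$ with Pinsker's inequality and Lemma~\ref{lem:concen} so that every constant $c_0,\ldots,c_5$ is free of $\tilde{\bA}_1$, and then transfers the result from $T_1$ to $T^*$ by writing $\bbE_1[T_1]-\bbE_1[T^*]\le\bbE_1[T^*\mathbbm{1}_{\calB_2\cup\cdots\cup\calB_M}]$ and bounding this correction with the uniform-integrability estimate~\eqref{eqn:mid1} together with $P_1(\calB_2\cup\cdots\cup\calB_M)=\alpha_1\le\alpha$. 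You instead obtain the lower bound on $\bbE_i[T^*]$ directly from the converse~\eqref{eqn:compositestop} applied to $\Phi^*\in\calS_{\rmD}(\alpha)$ — which is uniform over $\calA_i$ for free since the deficit factor $(1-\alpha)\log\frac{1-\alpha}{\alpha}/\log(1/\alpha)$ does not see $\tilde{\bA}_i$ and $\rho_i\ge 2B^*$ — and you only need a one-sided analysis of $T_i\ge T^*$ for the upper bound, via a tail estimate beyond $n^*=\lceil(1+\epsilon)\log(1/\alpha)/\rho_i\rceil$ built from uniform continuity of the fixed map $\psi_i$ on the simplex, Pinsker, and Lemma~\ref{lem:concen}. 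This sidesteps both the $\calB_j$-event correction and the reuse of~\eqref{eqn:mid1}, and is arguably more modular; the price is the extra $\epsilon\downarrow 0$ limiting step, whereas the paper's computation yields an explicit quantitative modulus of convergence (terms of order $\epsilon$, $\sqrt{\epsilon}$, $e^{-\epsilon\log(1/\alpha)/c_1}$ and $\log(1/\alpha)^{-\zeta}$) that could be tracked if a rate were needed. All the points you flag as requiring care — the uniform confinement $\rho_i\in[2B^*,\bar Q]$, the eventual monotonicity of $\gamma_n$, and keeping the concentration bound's constants $\tilde{\bA}_i$-free — do indeed go through as you describe.
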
 
The proof of Lemma~\ref{lem:uniform} is presented in   Appendix~\ref{sec:lemma5}. From~\eqref{eqn:optpayoff}, we have that
\begin{align*}
&\lim_{\alpha\to 0^+}-u_{\blambda}^{(\alpha)}(\Phi^*,(\tilde{\bA}_1,\ldots,\tilde{\bA}_M))\\
&=-\sum_{i=1}^M\lambda_i\min_{ j\in[M]\setminus\{ i \}}\Big[\min_{\bA_j\in\calA_j} D(P_i\tilde{\bA}_i\|P_j\bA_j)\Big].
\end{align*}
It is obvious from the definition of $\bA_i^{*}, i\in[M]$ in~\eqref{eqn:opta0} that
\begin{align}
&\lim_{\alpha\to 0^+}-u_{\lambda}^{(\alpha)}(\Phi^{*},(\bA_1^*,\ldots,\bA_M^*))\notag\\
&=-\sum_{i=1}^M\lambda_i\min_{j\neq i}\Big[\min_{\bA_j\in\calA_j} D(P_i{\bA}^*_i\|P_j\bA_j)\Big]\notag\\
&\geq \sup_{(\tilde{\bA}_1,\ldots,\tilde{\bA}_M)\in\calS_{\rmA}(\Delta)}-\sum_{i=1}^M\lambda_i\min_{j\neq i}\Big[\min_{\bA_j\in\calA_j} D(P_i\tilde{\bA}_i\|P_j\bA_j)\Big]\notag\\
&=\lim_{\alpha\to 0^+}\sup_{(\tilde{\bA}_1,\ldots,\tilde{\bA}_M)\in\calS_{\rmA}(\Delta)}-u_{\lambda}^{(\alpha)}(\Phi^{*},(\tilde{\bA}_1,\ldots,\tilde{\bA}_M))\label{eqn:uniform}, 
\end{align} 
where~\eqref{eqn:uniform} is based on Lemma~\ref{lem:uniform} (see   Appendix~\ref{sec:lemma5} for details). Thus, the proof of~\eqref{eqn:opt2} is complete.

The proof of 
Theorem~\ref{thm:advboth} is completed by combining the above three parts.
\end{proof}

\section{Extension to the Adversary Non-awareness Setting}
\label{sec:nonawareness}

In this section, we consider the case when the adversary {\em does not} know the underlying distribution of observed samples in a {\em binary} hypothesis test. In this case, referring to Fig.~\ref{fig:actboth}, the adversary can only apply a   {\em common} perturbation mechanism $\bA$ to the two hypotheses. 

We define the expectation of the stopping time under $H_i$ as $\bbE_i[\tau]$ for $i\in\{0,1\}$ and a (non-aware) test is a pair $\Phi_{\mathrm{NA}}=(\tau,\delta_{\mathrm{NA}})$. We also define the adversary's and decision maker's strategy sets as
\begin{align*}
    \hat{\calS}_{\rmA}(\Delta):=&\bigg\{\bA:\max_{i\in\{0,1\}}d(P_i,P_i\bA)\le \Delta\bigg\},\quad\mbox{and}\\
    \hat{\calS}_{\rmD}(\alpha):=&\bigg\{\Phi_{\mathrm{NA}}:\max_{i\in\{0,1\}}\sup_{\bA\in\calS_{\rmA}(\Delta)}\alpha_i(\Phi_{\mathrm{NA}})\leq\alpha\bigg\},
\end{align*}
respectively, and the payoff function of the decision maker as
\begin{align*}
    \hat{u}_{\lambda}^{(\alpha)}(\Phi_{\mathrm{NA}},\bA):= \frac{\log(1/\alpha)}{\bbE_0[\tau]}+\lambda\frac{\log(1/\alpha)}{\bbE_1[\tau]}.
\end{align*}
We define 
\begin{align*}
    S_n:=\min_{\bA\in\hat{\calS}_{\rmA}(\Delta)}\max\{D(\hat{Q}_{Y^n}\|P_0\bA),D(\hat{Q}_{Y^n}\|P_1\bA)\}.
\end{align*}
Then we define the stopping time as
\begin{align*}
    \tau^*:=\inf\big\{n\geq 1: S_n\geq\gamma_n\big\},
\end{align*}
and the decision rule as 
\begin{align*}
    \delta_{\mathrm{NA}}^* (Y^n):=
    \begin{cases}
    0,& \mbox{if~} \min_{\bA\in\hat{\calS}_{\rmA}(\Delta)}D(\hat{Q}_{Y^n}\|P_1\bA)\geq\gamma_n,\\
    1,& \mbox{if~}\min_{\bA\in\hat{\calS}_{\rmA}(\Delta)}D(\hat{Q}_{Y^n}\|P_0\bA)\geq\gamma_n.
    \end{cases}
\end{align*}
Then our adversary non-aware test is $\Phi_{\mathrm{NA}}^*=(\tau^*,\delta_{\mathrm{NA}}^*)$. We obtain the following two propositions which present the achievable and converse results respectively. 
\begin{proposition}
\label{prop:lower}
If $\calS_{\rmA}(\Delta)$ is a compact set, then for any $\lambda>0$, we have that
\begin{align}
\label{eqn:lower}
    &\lim_{\alpha\to 0^+}\hat{u}_{\lambda}^{(\alpha)}(\Phi^*_{\mathrm{NA}},\tilde{\bA})\notag\\
    &= \min_{\bA\in\hat{\calS}_{\rmA}(\Delta)}\max\Big\{ D(P_0\tilde{\bA}\|P_1\bA),D(P_0\tilde{\bA}\|P_0\bA)\Big\}\notag\\
    &\;\;+\!\lambda\! \min_{\bA\in\hat{\calS}_{\rmA}(\Delta)}\!\max\!\Big\{\! D(P_1\tilde{\bA}\|P_0\bA),\!D(P_1\tilde{\bA}\|P_1\bA)\!\Big\}.\!
\end{align}
\end{proposition}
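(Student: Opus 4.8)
The plan is to compute the payoff $\hat u_{\lambda}^{(\alpha)}(\Phi_{\mathrm{NA}}^*,\tilde{\bA})=\frac{\log(1/\alpha)}{\bbE_0[\tau^*]}+\lambda\frac{\log(1/\alpha)}{\bbE_1[\tau^*]}$ by determining $\lim_{\alpha\to0^+}\bbE_i[\tau^*]/\log(1/\alpha)$ for $i\in\{0,1\}$, following closely the template of Parts~1 and~2 of the proof of Theorem~\ref{thm:advboth}, with the statistic $S_n$ playing the role of $Z_i^{(n)}$ and $\tau^*$ that of $T^*$; the one genuinely new feature is the inner $\max$ of two divergences in the definition of $S_n$. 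Along the way I would also record that $\Phi_{\mathrm{NA}}^*\in\hat\calS_{\rmD}(\alpha)$, which follows verbatim from Part~1: under $H_0$ the observations have law $P_0\tilde{\bA}$ with $\tilde{\bA}\in\hat\calS_{\rmA}(\Delta)$, so $\{\delta_{\mathrm{NA}}^*(Y^n)=1\}\subseteq\{\min_{\bA\in\hat\calS_{\rmA}(\Delta)}D(\hat Q_{Y^n}\|P_0\bA)\geq\gamma_n\}\subseteq\{D(\hat Q_{Y^n}\|P_0\tilde{\bA})\geq\gamma_n\}$, and summing the Sanov bound of Lemma~\ref{lem:concen} over $n\geq1$ together with the form of $\gamma_n$ (here $M=2$, so $\log(M-1)=0$) gives $\alpha_0\leq\alpha$ uniformly over $\tilde{\bA}$, and symmetrically $\alpha_1\leq\alpha$; how ties in $\delta_{\mathrm{NA}}^*$ are broken is irrelevant to this one-sided containment.

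First I would bound the tail of $\tau^*$, in the spirit of Lemma~\ref{lem:finite}. The event $\{\tau^*\geq n\}$ forces $S_{n-1}<\gamma_{n-1}$, hence there is $\bA^\circ\in\hat\calS_{\rmA}(\Delta)$ with $\max\{D(\hat Q_{Y^{n-1}}\|P_0\bA^\circ),D(\hat Q_{Y^{n-1}}\|P_1\bA^\circ)\}<\gamma_{n-1}$; using $D(\hat Q_{Y^{n-1}}\|P_1\bA^\circ)<\gamma_{n-1}$ and the Bhattacharyya identity $\min_{P}[D(P\|P_0\tilde{\bA})+D(P\|P_1\bA^\circ)]=2B(P_0\tilde{\bA},P_1\bA^\circ)$ gives $D(\hat Q_{Y^{n-1}}\|P_0\tilde{\bA})>2\underline B_0-\gamma_{n-1}$, where $\underline B_0:=\min_{\bA\in\hat\calS_{\rmA}(\Delta)}B(P_0\tilde{\bA},P_1\bA)>0$ by the choice of $\Delta$ (the direct analogue of the positivity condition stated in Section~\ref{sec:formulation}, using compactness of $\hat\calS_{\rmA}(\Delta)$). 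Lemma~\ref{lem:concen} then yields $P_0(\tau^*\geq n)\leq\frac{C}{\alpha}\,n^{2|\calX|}\,e^{-(n-1)2\underline B_0+(n-1)^{1-\zeta}}$, with the symmetric bound under $H_1$ governed by $\underline B_1:=\min_{\bA\in\hat\calS_{\rmA}(\Delta)}B(P_1\tilde{\bA},P_0\bA)>0$; in particular $\tau^*<\infty$ $P_i$-a.s.\ for every $0<\alpha\leq1$.

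Next I would pin down the almost sure rate of $\tau^*$. Under $H_0$ the strong law of large numbers gives $\hat Q_{Y^n}\to P_0\tilde{\bA}$ a.s. Since $\hat\calS_{\rmA}(\Delta)$ is compact and each $P_i\bA$ has full support, $\beta:=\min_{i\in\{0,1\}}\min_{\bA\in\hat\calS_{\rmA}(\Delta)}\min_{y\in\calX}(P_i\bA)(y)>0$, and the estimate $|\log x-\log y|\leq\beta^{-1}|x-y|$ for $x,y\geq\beta$ (used in Part~2), together with the linearity of $\bA\mapsto P_i\bA$, shows that $\bA\mapsto D(\hat Q_{Y^n}\|P_i\bA)$ and hence, via $|\max\{a,b\}-\max\{a',b'\}|\leq\max\{|a-a'|,|b-b'|\}$, the map $\bA\mapsto\max\{D(\hat Q_{Y^n}\|P_0\bA),D(\hat Q_{Y^n}\|P_1\bA)\}$, is Lipschitz with a constant that is deterministic and uniform in $n$; combined with pointwise a.s.\ convergence and the stochastic Arzel\`a--Ascoli lemma~\cite[Theorem~14.3.2]{rao2008course} this yields uniform a.s.\ convergence $S_n\to R_0:=\min_{\bA\in\hat\calS_{\rmA}(\Delta)}\max\{D(P_0\tilde{\bA}\|P_0\bA),D(P_0\tilde{\bA}\|P_1\bA)\}$, with no uniqueness of the minimizer required. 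Because $S_m\leq Q_{\max}:=\max_{i\in\{0,1\}}\max_{\bA\in\hat\calS_{\rmA}(\Delta)}[-\log\min_{y\in\calX}(P_i\bA)(y)]<\infty$ for all $m$, the stopping inequality $S_{\tau^*}\geq\gamma_{\tau^*}\geq\log(C/\alpha)/\tau^*$ forces $\tau^*\geq\log(C/\alpha)/Q_{\max}\to\infty$, so $\tau^*\to\infty$ a.s.\ as $\alpha\to0^+$, whence $S_{\tau^*},S_{\tau^*-1}\to R_0$ a.s.\ (and $R_0>0$, again by the choice of $\Delta$). The inequality $S_{\tau^*-1}<\gamma_{\tau^*-1}$ with $S_{\tau^*-1}\to R_0>0$ rules out $\tau^*/\log(1/\alpha)\to\infty$ (that would force $\gamma_{\tau^*-1}\to0$), so $\tau^*=O(\log(1/\alpha))$ a.s., which makes the terms $(\tau^*)^{-\zeta}$ and $|\calX|(\tau^*)^{-1}\log(\tau^*+1)$ of $\gamma_{\tau^*}$ negligible next to its leading term $\log(C/\alpha)/\tau^*$; sandwiching $\gamma_{\tau^*}\leq S_{\tau^*}$ against $S_{\tau^*-1}\leq\gamma_{\tau^*-1}$ then gives $\lim_{\alpha\to0^+}\tau^*/\log(1/\alpha)\stackrel{\text{a.s.}}{=}1/R_0$ under $H_0$, and symmetrically the a.s.\ limit equals $1/R_1$ under $H_1$, where $R_1:=\min_{\bA\in\hat\calS_{\rmA}(\Delta)}\max\{D(P_1\tilde{\bA}\|P_0\bA),D(P_1\tilde{\bA}\|P_1\bA)\}$. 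To transfer these to limits in mean I would verify uniform integrability of $\{\tau^*/\log(1/\alpha)\}_{0<\alpha\leq\epsilon_0}$ for a suitable $\epsilon_0\in(0,1)$: the tail bound above plugs into the same chain of estimates as~\eqref{eqn:mid1}, with $B^*$ replaced by $\underline B_0$ (resp.\ $\underline B_1$) and $M-1=1$, choosing $\eta$ so that $\eta\underline B_0\geq2|\calX|+2$, after which the bound vanishes as $\eta\to\infty$. Hence $\bbE_i[\tau^*]/\log(1/\alpha)\to1/R_i$, and therefore $\lim_{\alpha\to0^+}\hat u_{\lambda}^{(\alpha)}(\Phi_{\mathrm{NA}}^*,\tilde{\bA})=R_0+\lambda R_1$, which is exactly~\eqref{eqn:lower}.

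I expect the main obstacle to be the identification $S_n\to R_0$ in the second step: one cannot reuse the ``unique minimizer via strong convexity'' shortcut used for the $D(P_i\tilde{\bA}_i\|P_j\bA_j)$ terms in Theorem~\ref{thm:advboth}, because $\bA\mapsto\max\{D(P_0\tilde{\bA}\|P_0\bA),D(P_0\tilde{\bA}\|P_1\bA)\}$ is merely convex, not strongly convex, and its minimizer may lie on the kink where the two divergences coincide. The way around this is that uniform a.s.\ convergence of the (uniformly Lipschitz) integrand over the compact set $\hat\calS_{\rmA}(\Delta)$ already forces the minimum to converge, so uniqueness is never needed; and verifying that uniform Lipschitz bound reduces, by $1$-Lipschitzness of $\max$, to the single-divergence case already handled in Part~2. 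The remaining ingredients, namely the tail bound, a.s.\ finiteness, and uniform integrability, are routine transcriptions of Lemmas~\ref{lem:finite}--\ref{lem:uniform} and~\eqref{eqn:mid1} to the non-aware binary setting.
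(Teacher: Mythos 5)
Your proposal is correct and follows essentially the same route as the paper's (sketched) proof: membership of $\Phi_{\mathrm{NA}}^*$ in $\hat{\calS}_{\rmD}(\alpha)$ via the Part-1 argument, a Bhattacharyya-based tail bound for $\tau^*$, almost-sure identification of $\tau^*/\log(1/\alpha)$ via the strong law and stochastic Arzel\`a--Ascoli, and uniform integrability as in~\eqref{eqn:mid1}. Your write-up is in fact slightly more careful in two spots the paper glosses over: in the tail bound the concentration should indeed be applied to $D(\hat{Q}_{Y^{n-1}}\|P_0\tilde{\bA})$ with the constant $\min_{\bA}B(P_0\tilde{\bA},P_1\bA)$ (the paper's sketch writes $P_1\tilde{\bA}$ and pairs both divergences with the same $\bA$), and your observation that uniform convergence of a uniformly Lipschitz family over the compact set $\hat{\calS}_{\rmA}(\Delta)$ makes the unique-minimizer/strong-convexity step dispensable for the $\max$ of two divergences is exactly the right way to handle the new feature of this setting.
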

\begin{proof}[Proof sketch of Proposition~\ref{prop:lower}]
 Based on the definition of $\delta_{\mathrm{NA}}^*$, following the same procedure as in the first part of proof in Section~\ref{sec:main}, we can  prove that $\Phi_{\mathrm{NA}}^*\in\hat{\calS}_{\rmD}(\Delta)$. To obtain the error exponents for the test $\Phi_{\mathrm{NA}}^*$ and any $\tilde{\bA}\in\hat{\calS}_{\rmA}(\Delta)$, we first prove a result similar to  Lemma~\ref{lem:finite}. 
 Set $M=2$ in $\gamma_{n-1}$. Then we have
 \begin{align*}
     P_0(\tau^*\geq n)&\leq P_0(S_{n-1}\leq \gamma_{n-1})\\
     &\leq P_0\bigg(\min_{\bA\in\hat{\calS}_{\rmA}(\Delta)}D(\hat{Q}_{Y^{n-1}}\|P_0\bA)\leq \gamma_n\bigg)\\
     &\leq P_0\bigg(D(\hat{Q}_{Y^{n-1}}\|P_1\tilde{\bA})\geq -\gamma_n+2\hat{B}^{*}\bigg)\\
     &\leq \frac{1}{\alpha} e^{-(n-1)2\hat{B}^{*}} n^{2|\calX|} e^{(n-1)^{\zeta}},
 \end{align*}
 where $\hat{B}^{*}:=\min_{\bA\in\hat{\calS}_{\rmA}(\Delta)}B(P_0\bA,P_1\bA)$. Then following the same procedure as in the second part of proof in Section~\ref{sec:main} but now the limit is
\begin{align*}
        \lim_{\alpha\to 0^+}\!\frac{\log{\frac{1}{\alpha}}}{\bbE_0[\tau^*]}\!=\! \min_{\bA\in\hat{\calS}_{\rmA}(\Delta)}\!\max\!\Big\{ \!D(P_0\tilde{\bA}\|P_1\bA),D(P_0\tilde{\bA}\|P_0\bA)
        \!\Big\}.
\end{align*}
We can obtain an analogous  result for $\lim_{\alpha\to 0^+}\frac{\log{\frac{1}{\alpha}}}{\bbE_1[\tau^*]}$. So combining the above two results, we can obtain~\eqref{eqn:lower}.
\end{proof}
Based on Proposition~\ref{prop:lower}, we see that the adversary can choose the strategy that minimizes the achievable bound of the decision maker:
\begin{align*}
    \bA^*=&\argmin_{\tilde{\bA}\in\hat{\calS}_{\rmA}(\Delta)}\!\bigg[\!\min_{\bA\in\hat{\calS}_{\rmA}(\Delta)}\!\max\!\Big\{\! D(P_0\tilde{\bA}\|P_1\bA), D(P_0\tilde{\bA}\|P_0\bA)\!\Big\}\\
    &\quad+\!\lambda \min_{\bA\in\hat{\calS}_{\rmA}(\Delta)}\!\max\Big\{ D(P_1\tilde{\bA}\|P_0\bA),D(P_1\tilde{\bA}\|P_1\bA)\Big\}\bigg].
\end{align*}
Using this   strategy, we   find that 
\begin{align*}
    \lim_{\alpha\to 0^+}\hat{u}_{\lambda}^{(\alpha)}(\Phi^*_{\mathrm{NA}},\bA^*)\geq \lim_{\alpha\to 0^+}  \hat{u}_{\lambda}^{(\alpha)}(\Phi^*,(\bA_0^*,\bA_1^*)),
\end{align*}
which means that the decision maker can obtain a better (no worse, to be precise) performance than the adversary awareness case. In Proposition~\ref{prop:upper}, we prove a converse bound for any pair of strategies.
\begin{proposition}
\label{prop:upper}
For any test $\Phi_{\mathrm{NA}}\in\hat{\calS}_{\rmD}(\alpha)$ and any $\tilde{\bA}\in\hat{\calS}_{\rmA}(\Delta)$, we have
\begin{equation}
\label{eqn:upper}
   \lim_{\alpha\to 0^+}\!\hat{u}_{\lambda}^{(\alpha)}\!(\Phi_{\mathrm{NA}},\tilde{\bA})\leq D(P_0\tilde{\bA}\|P_1\tilde{\bA})+\lambda D(P_1\tilde{\bA}\|P_0\tilde{\bA}).
\end{equation}
\end{proposition}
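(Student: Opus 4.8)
The plan is to observe that once the adversary commits to the fixed channel $\tilde{\bA}$, the non-aware game collapses to an ordinary binary sequential hypothesis test between the two distributions $Q_0:=P_0\tilde{\bA}$ and $Q_1:=P_1\tilde{\bA}$, both fully supported on $\calX$ by the standing assumption on the channels, and then to invoke the classical converse lower bound on the expected sample size — precisely the bound already derived in~\eqref{eqn:DPI}, specialized to $M=2$.

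First I would fix $\tilde{\bA}\in\hat{\calS}_{\rmA}(\Delta)$, and for each $\alpha\in(0,1/2)$ fix an arbitrary test $\Phi_{\mathrm{NA}}=(\tau,\delta_{\mathrm{NA}})\in\hat{\calS}_{\rmD}(\alpha)$. Since $\tilde{\bA}$ belongs to the adversary's feasible set and the constraint defining $\hat{\calS}_{\rmD}(\alpha)$ is a supremum over that set, the type-I and type-II error probabilities of $\Phi_{\mathrm{NA}}$, computed under the law where $Y^n$ is \iid $\sim Q_0$ (resp.\ $\sim Q_1$), are each at most $\alpha<1/2$. I would then apply~\cite[Lemma~4.3.1]{tartakovsky2014sequential} — equivalently, rerun the chain $(a)$--$(c)$ yielding~\eqref{eqn:DPI} with $M=2$, in which the maximum over $j\neq i$ has a single term — to obtain
\begin{align*}
\bbE_0[\tau]\ \geq\ \frac{(1-\alpha)\log\frac{1-\alpha}{\alpha}}{D(P_0\tilde{\bA}\|P_1\tilde{\bA})},
\qquad
\bbE_1[\tau]\ \geq\ \frac{(1-\alpha)\log\frac{1-\alpha}{\alpha}}{D(P_1\tilde{\bA}\|P_0\tilde{\bA})}.
\end{align*}
The two divergences are finite (full support) and strictly positive, since the constraint on $\Delta$ in Section~\ref{sec:formulation} forces $D(P_i\bA\|P_j\bA)\geq\epsilon>0$ for $i\neq j$ when $\bA=\tilde{\bA}$, so neither bound is vacuous.

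Next I would plug these into the definition of the payoff, obtaining for every $\alpha\in(0,1/2)$ and every $\Phi_{\mathrm{NA}}\in\hat{\calS}_{\rmD}(\alpha)$,
\begin{align*}
\hat{u}_{\lambda}^{(\alpha)}(\Phi_{\mathrm{NA}},\tilde{\bA})
=\frac{\log(1/\alpha)}{\bbE_0[\tau]}+\lambda\,\frac{\log(1/\alpha)}{\bbE_1[\tau]}
\ \leq\ \frac{\log(1/\alpha)}{(1-\alpha)\log\frac{1-\alpha}{\alpha}}\Big(D(P_0\tilde{\bA}\|P_1\tilde{\bA})+\lambda\, D(P_1\tilde{\bA}\|P_0\tilde{\bA})\Big).
\end{align*}
Letting $\alpha\to0^+$ and using $\log\frac{1-\alpha}{\alpha}=\log(1/\alpha)+\log(1-\alpha)$ together with $1-\alpha\to1$, the prefactor converges to $1$, which yields~\eqref{eqn:upper} (first as a $\limsup$ bound, hence as the stated limit whenever it exists).

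I do not anticipate a genuine obstacle: this is the standard sequential-testing converse, and the only two points that warrant care are (i) verifying that membership in $\hat{\calS}_{\rmD}(\alpha)$ really controls the error probabilities under the measure induced by the \emph{specific} channel $\tilde{\bA}$ — immediate, since $\tilde{\bA}$ is feasible and the constraint is a supremum over all feasible channels — and (ii) checking finiteness and positivity of $D(P_0\tilde{\bA}\|P_1\tilde{\bA})$ and $D(P_1\tilde{\bA}\|P_0\tilde{\bA})$ so that the reduction to~\eqref{eqn:DPI} is meaningful. If a fully self-contained write-up is preferred, one can reproduce the three-line derivation of~\eqref{eqn:DPI} for $M=2$ in place of citing the lemma; otherwise the cited result does all the work, and unlike the achievability direction no uniform-integrability or concentration machinery is needed.
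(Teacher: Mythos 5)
Your proposal is correct and follows essentially the same route as the paper: once $\tilde{\bA}$ is fixed, the problem reduces to an ordinary binary sequential test between $P_0\tilde{\bA}$ and $P_1\tilde{\bA}$, and the classical Wald-type converse on the expected sample size (the paper invokes SPRT optimality; you equivalently reuse the bound behind~\eqref{eqn:DPI} with $M=2$) gives the stated limit. Your write-up is in fact slightly more explicit than the paper's about why membership in $\hat{\calS}_{\rmD}(\alpha)$ controls the error probabilities under the specific channel $\tilde{\bA}$ and about the $\limsup$ reading of the limit, but there is no substantive difference in the argument.
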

\begin{proof}
As the adversary adopts the same strategy on both hypotheses, the problem is equivalent to the following hypothesis testing problem:
$H_0: P_0\tilde{\bA},$ v.s.\ $H_1: P_1\tilde{\bA}$. 
However, for this problem the decision maker have no knowledge of $\tilde{\bA}$. According to the optimality of sequential probability ratio test (SPRT)~\cite{tartakovsky2014sequential}, the upper bound on the error exponents that the decision maker can obtain for any test is
\begin{align*}
    \lim_{\alpha\to 0^+} \hat{u}_{\lambda}^{(\alpha)} (\Phi_{\mathrm{NA}},\tilde{\bA})\le D(P_0\tilde{\bA}\|P_1\tilde{\bA})+\lambda D(P_1\tilde{\bA}\|P_0\tilde{\bA}),
\end{align*}
as desired. 
\end{proof}
\begin{remark}
Observe that the achievable and converse bounds in~\eqref{eqn:lower} and~\eqref{eqn:upper} respectively do not match  in the adversary non-awareness setting. Thus, the pair of strategies $(\Phi_{\mathrm{NA}}^*,\bA^*)$ cannot, in general, achieve the asymptotic Nash equilibrium. However, comparing~\eqref{eqn:lower} to the adversary aware case in~\eqref{eqn:aware_case}, we see that the decision maker  can attain larger error exponents, which implies that the decision maker can perform better. This is aligned with our intuition, since now the adversary is weaker as it has to use the same $\bA$ under both hypotheses. 
\end{remark}

\section{Numerical Experiments}
\label{sec:example}
In this section, we provide two sets of experiments
to corroborate the theory developed in the previous sections. The first uses   synthetic data on a binary hypothesis testing problem with Bernoulli distributions to show that empirical stopping time converges to its theoretical counterpart. The second set of experiments shows that empirical stopping time converges to its theoretical counterpart on the MNIST dataset.

\subsection{Binary Test for Bernoulli Distributions}
Let the distributions be $\mathrm{Bern}(p_0)$ under $H_0$ and $\mathrm{Bern}(\frac{1}{2})$ under $H_1$, respectively. Without loss of generality, we assume $0<p_0<\frac{1}{2}$. We set the distortion measure $d$ to be the total variation distance and the distortion level to be $\Delta$. For the Bernoulli distribution, the adversary's strategy takes the form
\begin{align*}
    \bA_i=
    \begin{bmatrix}
    a_i  &1-a_i\\
     1-b_i  &b_i
    \end{bmatrix} ,\quad i\in\{0,1\}.
\end{align*}
Using the distortion constraints, we can obtain the relationship between $a_i$ and $b_i$ for $i\in\{0,1\}$ as follows:
\begin{align*}
    |1-2p_0-b_0+p_0(a_0+b_0)|&\leq \frac{\Delta}{2},\\
    |a_1-b_1|&\leq \Delta.
\end{align*}
 Then based on Theorem~\ref{thm:advboth}, we can calculate the optimal adversary's strategy by solving the optimization problem in~\eqref{eqn:opta0}. In the Bernoulli case, the perturbed distributions by the optimal adversary strategy are attained on the boundary (shown in Fig~\ref{fig:bern} with red crosses), which means that 
 \begin{align*}
    1-2p_0-b^*_0+p_0(a^*_0+b^*_0)&= \frac{\Delta}{2},\\
    b^*_1-a^*_1&= \Delta.
 \end{align*}
 The payoff function at the asymptotic Nash equilibrium is
\begin{align*}
    \lambda_1D_{\rmb}\Big(0.5- \frac{\Delta}{2}\Big\|p_0+\frac{\Delta}{2}\Big)+\lambda_2 D_{\rmb}\Big(p_0+\frac{\Delta}{2}\Big\|0.5-\frac{\Delta}{2}\Big),
\end{align*}
where $D_{\rmb}(a\|b):=a\log(\frac{a}{b}) + (1-a)\log(\frac{1-a}{1-b})$ is the binary KL divergence between two Bernoulli distributions with parameters $a,b\in (0,1)$.

\begin{figure}
     \centering
     \includegraphics{../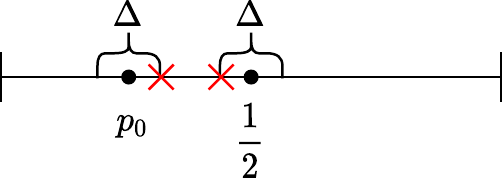}
     \caption{Binary test for Bernoulli distribution. The left red cross is the point whose value equal to $p_0a_0^*+(1-p_0)(1-b_0^*)$, i.e., the perturbed distributions by the optimal strategy $\bA_0^*$. The right red cross is the point with value $0.5 a_1^*+0.5(1-b_1^*)$, i.e., the perturbed distributions by the optimal strategy $\bA_1^*$.}
     \label{fig:bern}
 \end{figure}
Now we set $p_0=0.38$ and $\Delta=0.05$. We can calculate $\bA_i^*, i\in\{0,1\}$ numerically. Note that the optimizing matrices   are not unique. One of the optimizing pairs is
\begin{align*}
    \bA_0^*=\left[
    \begin{matrix}
    0.5  &0.5\\
    0.3419  &0.6581
    \end{matrix}\right]\quad\mbox{and}\quad
    \bA_1^*=\left[
    \begin{matrix}
    0.15  &0.85\\
    0.8  &0.2
    \end{matrix}\right].
\end{align*}
Then in this case, the payoff function at the asymptotic Nash equilibrium is
$
0.0109    \lambda_1 +0.0108\lambda_2 .
$
To corroborate Theorem~\ref{thm:advboth}, now we simulate the sequential adversarial hypothesis testing procedure.

We set $\alpha$ to different values and run the strategy defined in \eqref{eqn:stop}--\eqref{eqn:decision} a total of  $ 50,000$ times for each $\alpha$ to observe the stopping times and hence, the convergence of the payoff function as $\alpha\to 0^+$ under $H_0$ and $H_1$, respectively. The results are shown in Figs.~\ref{fig:limit1} and~\ref{fig:limit2}, respectively. The horizontal line is the theoretical payoff function at the Nash equilibrium, i.e., $\min_{(\bA_0,\bA_1)\in\calS_{\rmA}(\Delta)}D(P_
0\bA_0\|P_1\bA_1)$ or $\min_{(\bA_0,\bA_1)\in\calS_{\rmA}(\Delta)}D(P_1\bA_1\|P_0\bA_0)$. The dotted line is the estimated payoff function, i.e., $\frac{\log(1/\alpha)}{\bbE_{i}[T^*]}$ for $i=0,1$. We observe that as $\alpha\to 0^+$, i.e., $\log(1/\alpha)\to \infty$, $\frac{\log(1/\alpha)}{\bbE_{0}[T^*]}$ converges  to $D_{\rmb}(p_0+\Delta/2\|0.5-\Delta/2)$ under $H_0$ and $\frac{\log(1/\alpha)}{\bbE_{1}[T^*]}$ tends to  $D_{\rmb}(0.5-\Delta/2\|p_0+\Delta/2)$ under $H_1$. 

\begin{figure}
    \centering
    \includegraphics[width=\linewidth]{../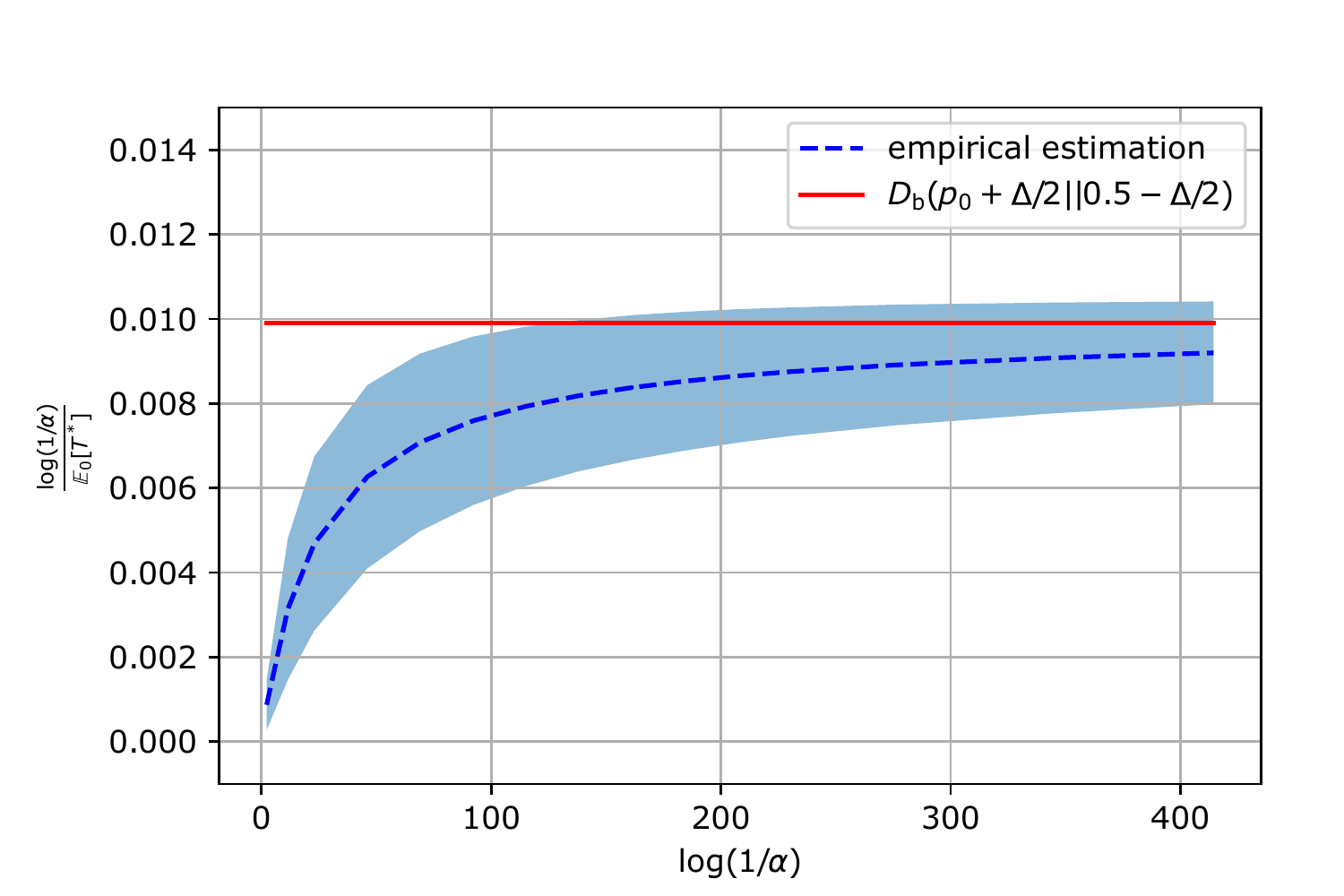}
    \caption{The change of payoff function as $\alpha\to 0^+$ under $H_0$. The shaded part denotes 1 standard deviation above and below the mean across 50,000 independent runs of the strategy defined in \eqref{eqn:stop}--\eqref{eqn:decision}.}
    \label{fig:limit1}
\end{figure}

\begin{figure}
    \centering
    \includegraphics[width=\linewidth]{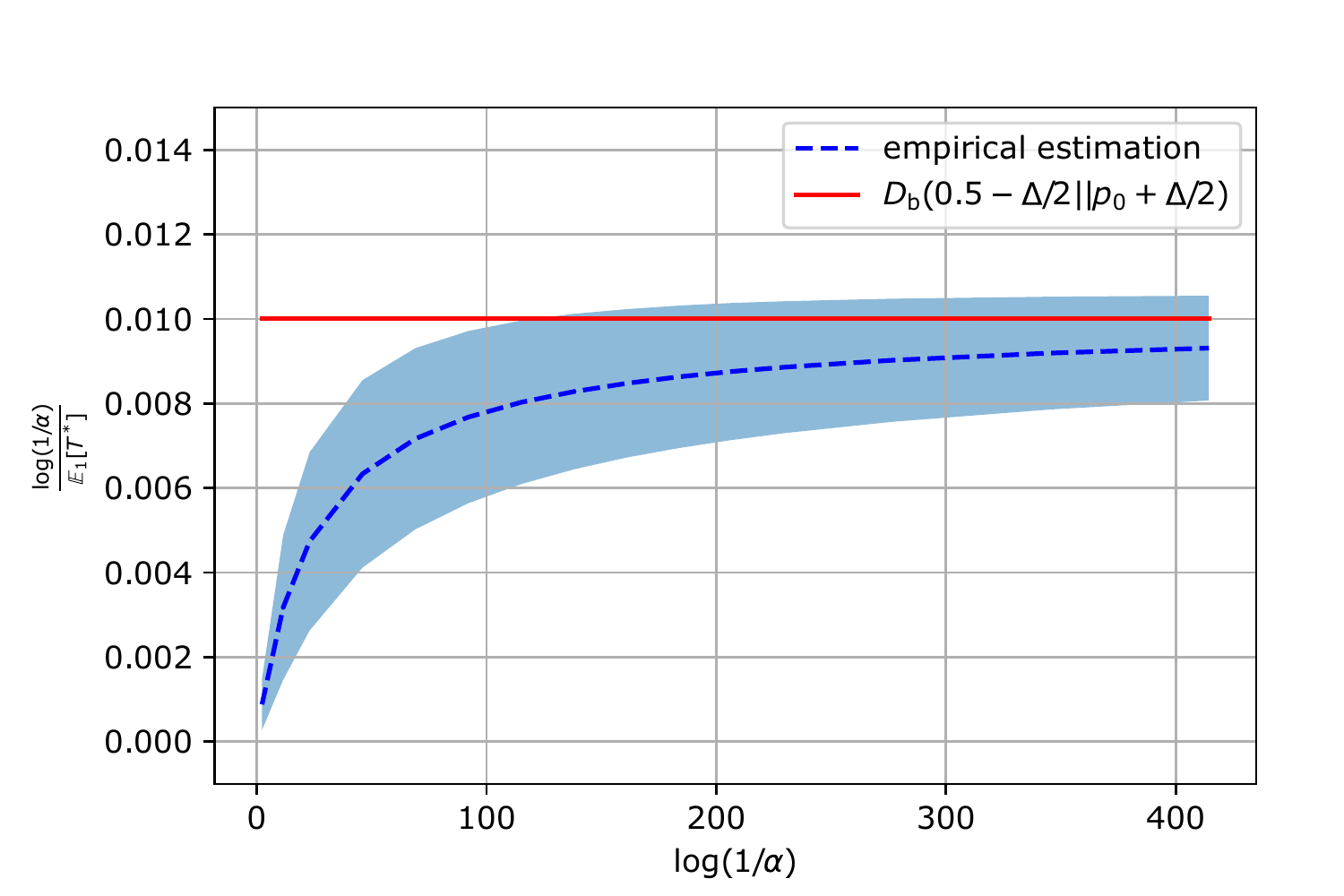}
    \caption{The change of payoff function as $\alpha\to 0^+$  under $H_1$}
    \label{fig:limit2}
\end{figure}

\subsection{Binary Test for the MNIST dataset}
In the previous section, we applied our sequential hypothesis testing strategy to synthetic data. To demonstrate the utility of our strategy on real-world data,  we now apply it to the MNIST dataset. For simplicity, we choose to test two classes from the MNIST dataset---digits 1 and  4. We  also binarize the MNIST data by choose a threshold (here we choose the threshold to be  $50$). When the pixel value greater than the threshold, we set the value to be $255$ and otherwise, we set the value to be $0$. Fig.~\ref{fig:mnist} shows a representative original image and its   binarized version.
\begin{figure}
     \centering
     \begin{subfigure}
         \centering
         \includegraphics[width=0.48\linewidth]{../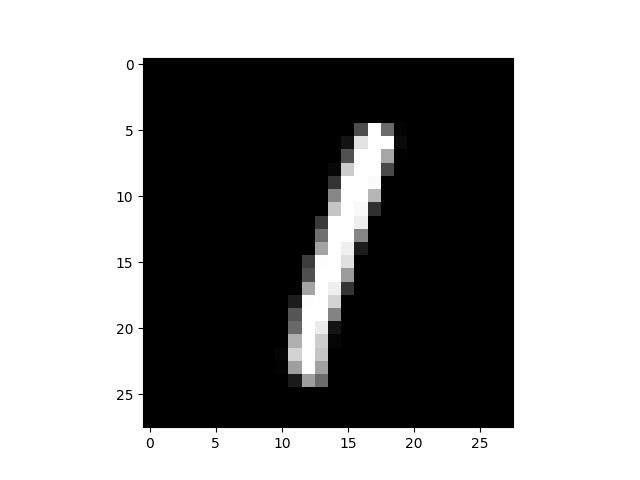}
     \end{subfigure}
     \begin{subfigure}
         \centering
         \includegraphics[width=0.48\linewidth]{../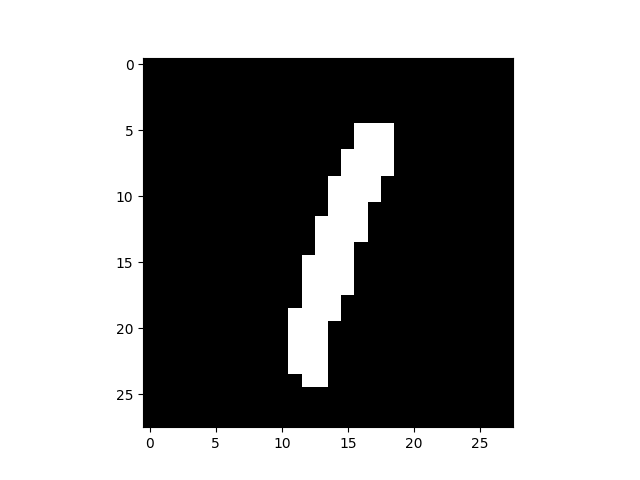}
     \end{subfigure}
     \hfill
    \begin{subfigure}
         \centering
         \includegraphics[width=0.48\linewidth]{../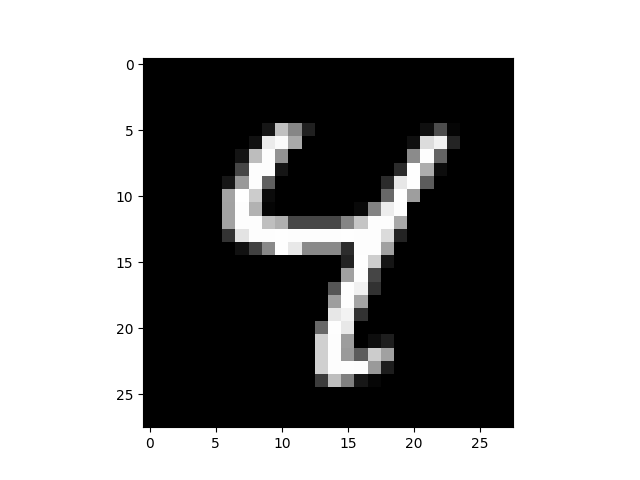}
     \end{subfigure}
    \begin{subfigure}
         \centering
         \includegraphics[width=0.48\linewidth]{../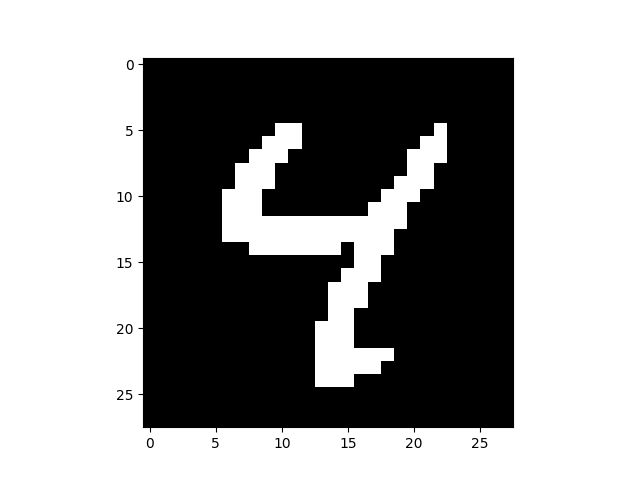}
     \end{subfigure}
    \caption{The images on the left are the the original MNIST ones and those on the right are their binarized counterparts.}
    \label{fig:mnist}
\end{figure}

 Denote the respective distributions for digits $1$ and $4$ respectively as $I_1=[f_{1,0},f_{1,255}]$ and $I_4=[f_{4,0},f_{4,255}]$. We use the training dataset to obtain an empirical estimates of their distributions. We find that  
 \begin{align*}
     I_1=[0.9061,0.09395]\quad\mbox{and}\quad I_4=[0.8481,0.1519].
 \end{align*}
We set the distortion measure to be the KL distance to ensure that $\calS_{\rmA}$ as a convex set and we also set $\Delta=0.001$. We can obtain the adversary's optimal strategy by solving~\eqref{eqn:opta0} numerically using MATLAB's convex optimization toolbox. This yields 
\begin{align*}
    \bA_1^* &=\left[
    \begin{matrix}
    0.94&0.06\\
    0.461 &0.539
    \end{matrix}\right]\quad\mbox{and}\quad
    \bA_4^* =\left[
    \begin{matrix}
    0.9&0.1\\
    0.645&0.355
    \end{matrix}\right].
\end{align*}
Fig.~\ref{fig:perturb} shows two examples of perturbed images produced the adversary. Note that as we use the pixel values to estimate the distribution, the adversary perturbs only the fraction of white and black pixels. 
\begin{figure}
     \centering
     \begin{subfigure}
         \centering
         \includegraphics[width=0.48\linewidth]{../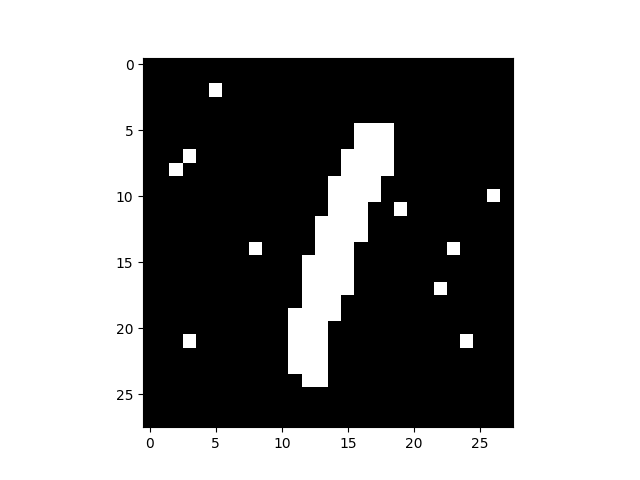}
     \end{subfigure}
     \begin{subfigure}
         \centering
         \includegraphics[width=0.48\linewidth]{../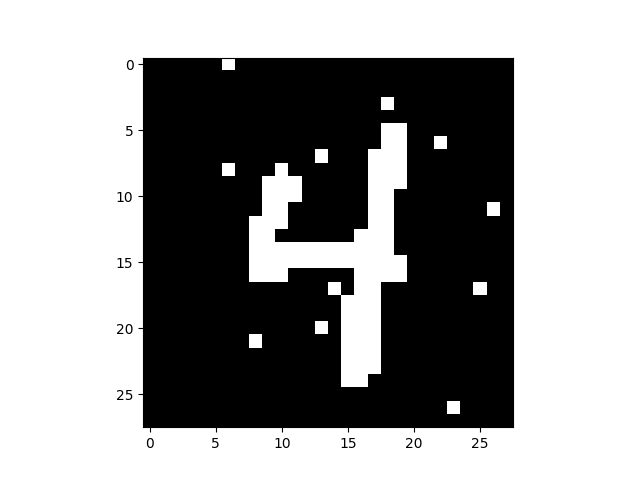}
     \end{subfigure}
    \caption{Two examples of  images perturbed by the adversary.}
    \label{fig:perturb}
\end{figure}

Now we use the test dataset to perform the sequential test. Due to the limited number of samples in the training and test data sets (there are $7877$  images for digit $1$ and $6824$   images for digit $4$), during the testing process, we use a resampling procedure to obtain more test images.
Fig.~\ref{fig:mnist4} and Fig.~\ref{fig:mnist1} show the change of payoff function as $\alpha\to 0^+$ when the true digit is $4$ and the true digit is $1$, respectively. We observe that when $\alpha\to 0^+$, $\frac{\log(1/\alpha)}{\bbE_{1}[T^*]}\to D(I_1\bA^*_1\|I_4\bA^*_4)$ when the true digit is~$1$ and $\frac{\log(1/\alpha)}{\bbE_{4}[T^*]}\to D(I_4\bA^*_4\|I_1\bA^*_1)$ when the true digit is~$4$. Thus the conclusion here is the same as that for     synthetic data, i.e., the promised fundamental limit is attained as $\alpha\to 0^+$.
\begin{figure}[t]
    \centering
    \includegraphics[width=\linewidth]{../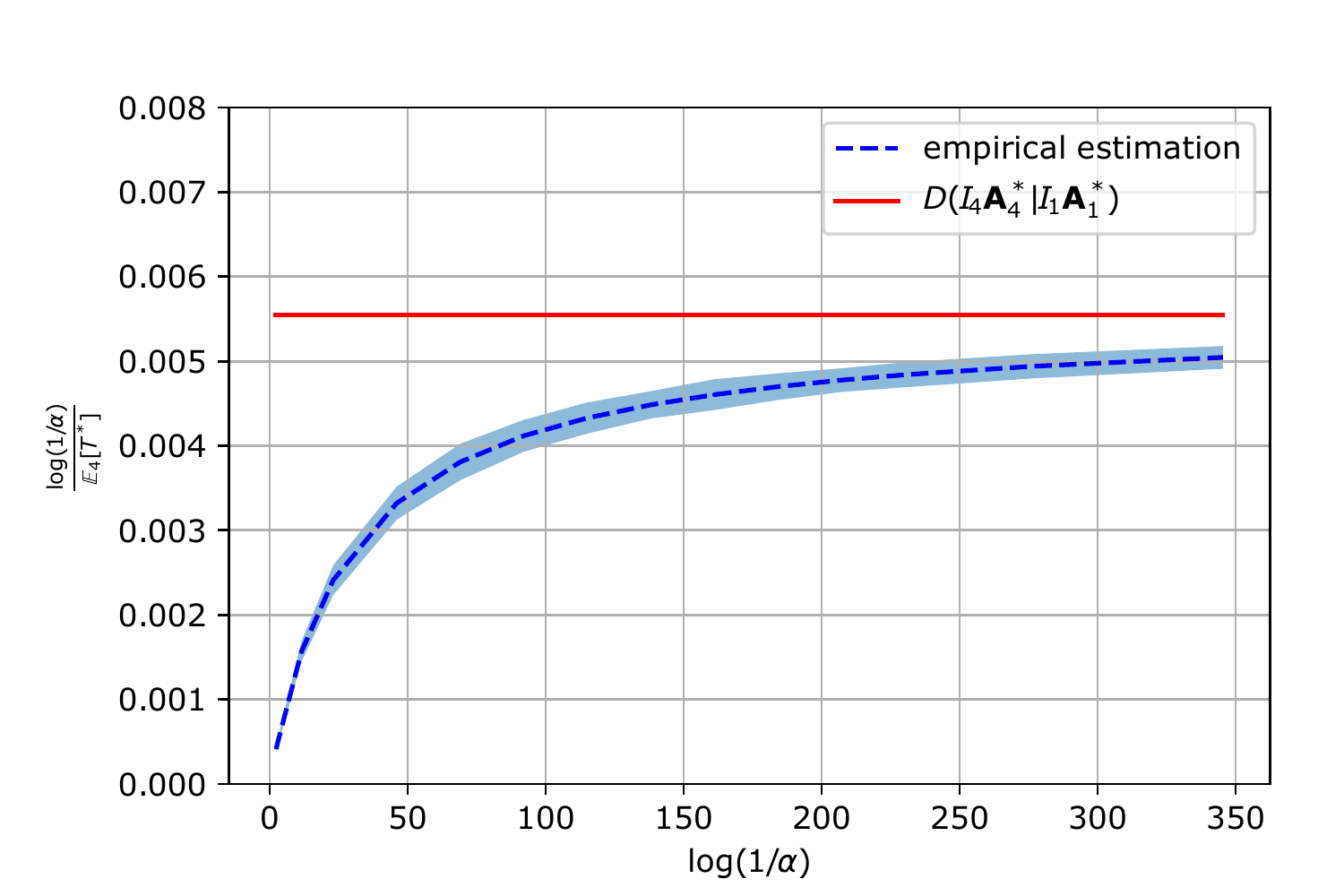}
    \caption{The change of the payoff function as $\alpha\to 0^+$  for digit $4$.}
    \label{fig:mnist4}
\end{figure}

\begin{figure}[t]
    \centering
    \includegraphics[width=\linewidth]{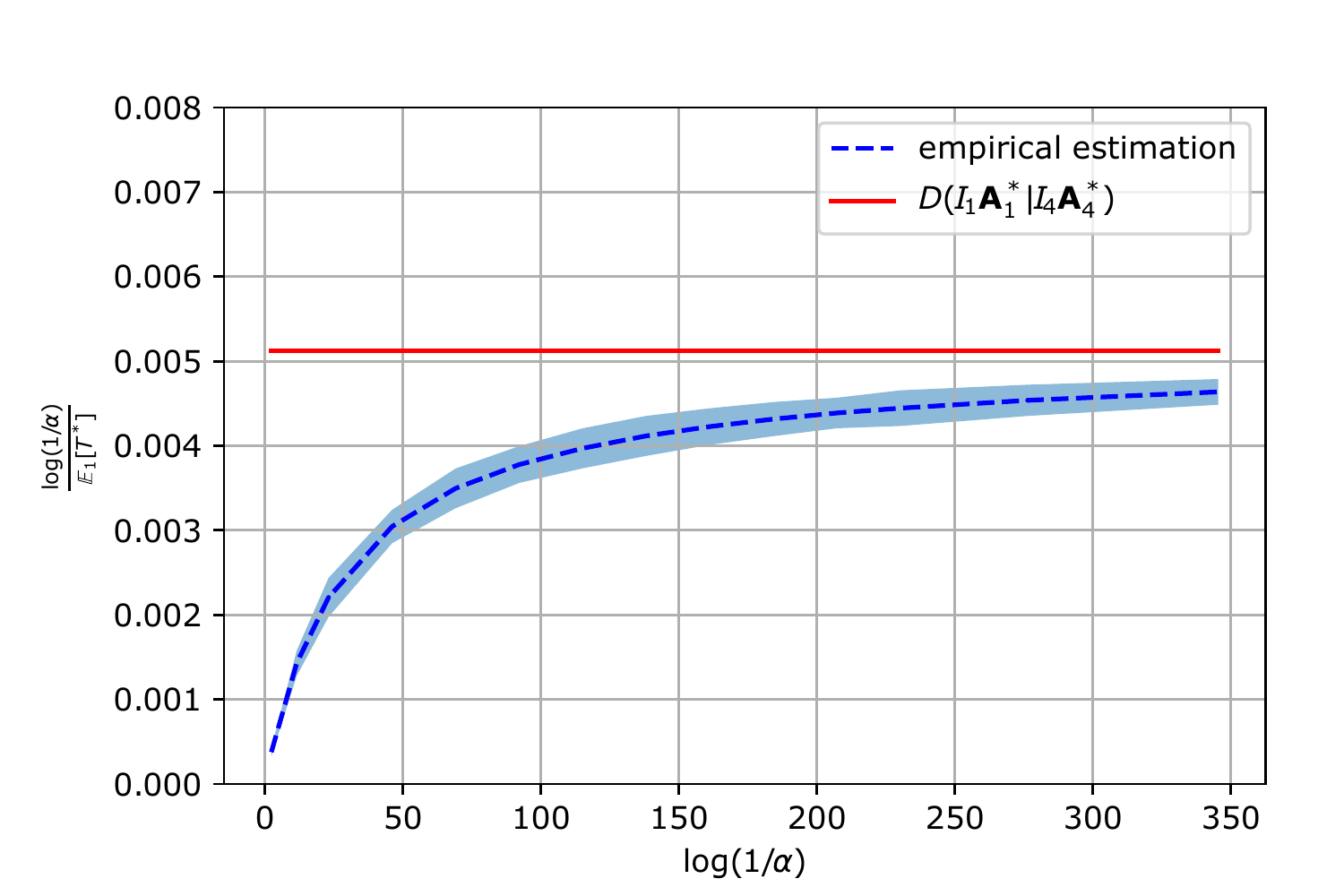}
    \caption{The change of the payoff function as $\alpha\to 0^+$  for  digit $1$.}
    \label{fig:mnist1}
\end{figure}

\section{Conclusion}
\label{sec:con}

In this work, we consider the $M$-ary sequential adversarial hypothesis testing problem. Different from the traditional $M$-ary sequential test, in this problem, an adversary is active and tries to perturbed the distributions of observed samples. Our objective is to obtain a pair of strategies for the adversary and the decision maker, in  which no party can increase its payoff by unilaterally changing its strategies, i.e., we wish to find the Nash equilibrium. In this paper, we obtain a pair of strategies at which the \emph{asymptotic Nash equilibrium} is attained. The adversary's strategy in the asymptotic Nash equilibrium is the transition matrices that minimize the Kullback--Leibler divergence between perturbed distributions, and the decision maker's strategy at the asymptotic Nash equilibrium is analogous to the sequential version of Hoeffding's test~\cite{hoeffding1965asymptotically}. 

In the future, several directions could be considered. First, in this paper, when consider the case that the adversary is not aware the underlying distribution of observed samples, the achievable and converse bounds do not match. This means that the pair of strategies we propose can not achieve the  Nash Equilibrium. We have endeavored to solve this problem but failed to find the pair of strategies attaining the  Nash equilibrium. Thus, we could consider to find the pair of strategies attaining the \emph{mixed Nash equilibrium}~\cite{osborne2004introduction}.  Second, as we only obtain the asymptotic Nash equilibrium when $\alpha\to 0^+$, one extension of our work is to consider the non-asymptotic Nash equilibrium for some fixed $\alpha\in (0,1)$. 
Third, we could also consider the case where the distribution of each hypothesis is unknown and we only access them through training sequences of each hypothesis. This case is analogue to sequential adversarial classification problem.

\appendix
\renewcommand{\thesubsection}{\Alph{subsection}}
\subsection{Proof of Lemma~\ref{lem:uniform}}
\label{sec:lemma5}
From the definitions of $T^*$ and $T_1$, we can see that $T_1\geq T^*$. Similar to the proof of Eqn.~\eqref{eqn:stoptime1}, we also can prove that 
\begin{equation}
 \lim_{\alpha\to 0^+}  \frac{\bbE_1[T_1]}{\log(1/\alpha)}= \frac{1}{\min_{j\neq 1}\big[\min_{\bA_j\in\calA_j} D(P_1\tilde{\bA}_1\|P_j\bA_j)\big]}. \label{eqn:conv}
\end{equation}
Now we want to show that the convergence above is uniform on $\calA_1$, which allows us to establish~\eqref{eqn:uniform}.

According to the definition of $T_1$, we have
\begin{align*}
&\min_{j\neq 1}\Big[\min_{\bA_j\in\calA_j}D(\hat{Q}_{Y^{T_1}}\|P_j\bA_j)\Big]\\
&\geq \frac{\log{(\frac{1}{\alpha})}}{T_1}+\frac{1}{T_1^{\zeta}}+\frac{|\calX|\log(T_1+1)+\log(M-1)}{T_1},
\end{align*}
and 
\begin{align*}
&\min_{j\neq 1}\Big[\min_{\bA_j\in\calA_j}D(\hat{Q}_{Y^{T_1-1}}\|P_j\bA_j)\Big]\\
&\leq \frac{\log{(\frac{1}{\alpha})}}{T_1-1}+\frac{1}{(T_1-1)^{\zeta}}+\frac{|\calX|\log(T_1)+\log(M-1)}{T_1-1}.
\end{align*}
Then, we have that 
\begin{align*}
\bigg|\frac{\log(1/\alpha)}{T_1}-{\min_{j\neq 1}\Big[\min_{\bA_j\in\calA_j}D(\hat{Q}_{Y^{T_1}}\|P_j\bA_j)\Big]}\bigg|\leq \frac{c_0 }{T_1^{\zeta}},
\end{align*}
where $c_0$ does not depend on $\tilde{\bA}_1$.  Then, we define $$D_{T_1}:=\min_{j\neq 1} \min_{\bA_j\in\calA_j}D(\hat{Q}_{Y^{T_1}}\|P_j\bA_j) ,$$ and $$D_1:=\min_{j\neq 1} \min_{\bA_j\in\calA_j}D(P_1\tilde{\bA}_1\|P_j\bA_j) .$$ We have
\begin{align*}
&\left|\bbE_1\left[\frac{\log(1/\alpha)}{T_1}-{D_1}\right]\right|\\
&=\left|\bbE_1\left[\frac{\log(1/\alpha)}{T_1}-{D_{T_1}}+{D_{T_1}}-{D_1}\right]\right|\\
&\leq \bbE_1\left[\left|\frac{\log(1/\alpha)}{T_1}-{D_{T_1}}\right|\right]+\bbE_1\big[\left|{D_{T_1}}-{D_1}\right| \big]\\
&\leq  \bbE_1\bigg[\frac{c_0}{T_1^{\zeta}}\bigg]+\bbE_1\big[\left|{D_{T_1}}-{D_1}\right| \big]
\end{align*}
Define $c_1:=\min_{j\neq 1}\big[\min_{{\bA}_j\in\calA_j}(-\log\min_{y\in\calX}{Q}_j(y))\big]$. For the first term, because
\begin{align*}
    &P_1(T_1\leq n)\\
    &\leq P_1\left(T_1\min_{j\neq 1}\Big[\min_{\bA_j\in\calA_j}D(\hat{Q}_{Y^{T_1}}\|P_j\bA_j)\Big]\!\geq\!\log\Big(\frac{1}{\alpha}\Big),T_1\leq n\right)\\
    &\leq P_1(c_1n>\log(1/\alpha),T_1\leq n)\\
    &=0,\qquad\forall \, n<\frac{\log(1/\alpha)}{c_1},
\end{align*}
we have that
\begin{align*}
P_1\left(T_1<\frac{\log(1/\alpha)}{c_1}\right)=0,
\end{align*}
This means that
\begin{align*}
T_1\geq \frac{\log(1/\alpha)}{c_1},\quad\mbox{a.s.}
\end{align*}
Thus,
\begin{align}
\label{eqn:bound1}
 \bbE_1\bigg[\frac{1}{T_1^{\zeta}}\bigg]\leq  \bigg( \frac{\log(1/\alpha)}{c_1}\bigg)^{-\zeta},
\end{align}
where $c_1$ does not depend on $\tilde{\bA}_i$. For the second term,
we define $c_2:=-\log\min_{\tilde{\bA}_i\in\calA_i}\min_{y\in\calX}\tilde{Q}_i(y)$. Let $\veps$ be an arbitrary fixed positive number.  Then we have that~\eqref{eqn:mid2} (on the top of next page),
\begin{figure*}
\begin{align}
\bbE_1\left[\left|{D_{T_1}}-{D_1}\right| \right]
&\stackrel{(a)}{\le} \bbE_1\bigg[ D(\hat{Q}_{Y^{T_1}}\|P_1\tilde{\bA}_1)+\max_{j\neq 1}\max_{\bA_j\in\calA_j}\sum_{a\in\calX}\bigg|(\hat{Q}_{Y^{T_1}}(a)-\tilde{Q}_1(a))\log\frac{\tilde{Q}_1(a)}{Q_j(a)}\bigg|\bigg]\notag\\
&\leq  \bbE_1\left[ D(\hat{Q}_{Y^{T_1}}\|P_1\tilde{\bA}_1)\right]+c_1|\calX|\bbE_1\left[ \sum_{a\in\calX}|\hat{Q}_{Y^{T_1}}(a)-\tilde{Q}_1(a)|\right]\notag\\
&\overset{(b)}{\leq } \bbE_1\left[ D(\hat{Q}_{Y^{T_1}}\|P_1\tilde{\bA}_1)\right]+c_3\bbE_1\left[\sqrt{D(\hat{Q}_{Y^{T_1}}\|P_1\tilde{\bA}_1)}\right]\notag\\
&=\bbE_1\left[ D(\hat{Q}_{Y^{T_1}}\|P_1\tilde{\bA}_1)\Big|D(\hat{Q}_{Y^{T_1}}\|P_1\tilde{\bA}_1)\geq\epsilon\right] P_1\left(D(\hat{Q}_{Y^{T_1}}\|P_1\tilde{\bA}_1)\geq\epsilon\right)\notag\\
&\qquad+P_1\left(D(\hat{Q}_{Y^{T_1}}\|P_1\tilde{\bA}_1)<\epsilon\right)\bbE_1\left[ D(\hat{Q}_{Y^{T_1}}\|P_1\tilde{\bA}_1)\Big|D(\hat{Q}_{Y^{T_1}}\|P_1\tilde{\bA}_1)<\epsilon\right]\notag\\
&\qquad+c_3\bbE_1\left[\sqrt{D(\hat{Q}_{Y^{T_1}}\|P_1\tilde{\bA}_1)}\Big|D(\hat{Q}_{Y^{T_1}}\|P_1\tilde{\bA}_1)\geq\epsilon\right] P_1\left(D(\hat{Q}_{Y^{T_1}}\|P_1\tilde{\bA}_1)\geq\epsilon\right)\notag\\
&\qquad+ c_3\bbE_1\left[\sqrt{D(\hat{Q}_{Y^{T_1}}\|P_1\tilde{\bA}_1)}\Big|D(\hat{Q}_{Y^{T_1}}\|P_1\tilde{\bA}_1)<\epsilon\right]P_1\left(D(\hat{Q}_{Y^{T_1}}\|P_1\tilde{\bA}_1)<\epsilon\right)\notag\\
&\leq \epsilon +\big(c_2|\calX|+\sqrt{c_2|\calX|}\big)P_1\left( D(\hat{Q}_{Y^{T_1}}\|P_1\tilde{\bA}_1)\geq\epsilon\right)+c_3\sqrt{\epsilon},\label{eqn:mid2}
\end{align}\hrulefill
\end{figure*}
where $(a)$ follows from that $|\min f(x)-\min g(x)|\le \max|f(x)-g(x)|$, $(b)$ follows from Pinsker's inequality~\cite[Lemma~11.6.1]{cover2006elements} and $c_2, c_3$ do not depend on $\tilde{\bA}_1$.
We also have 
\begin{align*}
P_1&\left( D(\hat{Q}_{Y^{T_1}}\|P_1\tilde{\bA}_1)\geq\epsilon\right)\\
&\leq \sum_{k\geq  \log(1/\alpha)/c_1} P_1\left( D(\hat{Q}_{Y^k}\|P_1\tilde{\bA}_1)\geq\epsilon\right)\\
&\leq  \sum_{k\geq \log(1/\alpha)/c_1} c_4e^{-k\epsilon}\\
&\leq c_5e^{-\frac{\log(1/\alpha)}{c_1}\epsilon},
\end{align*}
where $c_5$  depends only on $|\calX|$. Thus,
\begin{align}
\label{eqn:bound2}
&\bbE_1\big[\left|{D_{T_1}}-{D_1}\right| \big]\notag\\ 
&\quad \leq \epsilon +\big(c_2|\calX|+\sqrt{c_2|\calX|}\big)c_5 e^{-\frac{\log(1/\alpha)}{c_1}\epsilon}+c_3\sqrt{\epsilon}.
\end{align}
 Therefore, combining~\eqref{eqn:bound1} and~\eqref{eqn:bound2}, we have
\begin{align*}
\left|\bbE_1\left[\frac{\log(1/\alpha)}{T_1}-{D_1}\right]\right|&\leq  \epsilon +\big(c_2|\calX|+\sqrt{c_2|\calX|}\big)c_5 e^{-\frac{\log(1/\alpha)}{c_1}\epsilon}\\
&\quad+c_3\sqrt{\epsilon}+  c_0\bigg( \frac{\log(1/\alpha)}{c_1}\bigg)^{-\zeta}.
\end{align*}
As $c_i$ for $i = 0,1,\ldots, 5$ do not depend on $\tilde{\bA}_1$, the convergence in~\eqref{eqn:conv} is uniform over $\calA_1$. Now we show that the uniform convergence over $\calA_1$ also holds for $\big\{\frac{\bbE_1[T^*]}{\log(1/\alpha)}\big\}_{0<\alpha\leq 1}$ as $\alpha\to0^+$. 
For $i\in[M]$, let $\calB_i$ be the event that $T_i<T_1$. Note that $P_1\big(\cup_{i\not=1}\calB_i\big)$ is the error probability $\alpha_1$.
Conditioned on the events $\calB_i,i\neq 1$, we have $T^*<T_1$ and conditioned on the event $\calB_1$, we have $T^*=T_1$. Then
\begin{align}
\bbE_1[T^*]&=\bbE_1[T^*\mathbbm{1}_{ \calB_2\cup\cdots\cup \calB_M }]+\bbE_1[T^*\mathbbm{1}_{ \calB_{1} }]\notag\\
&=\bbE_1[T_1]+\bbE_1[(T^*-T_1)\mathbbm{1}_{ \calB_2\cup\cdots\cup \calB_M }]\notag\\
&\ge \bbE_1[T_1]-\bbE_1[T^*\mathbbm{1}_{ \calB_2\cup\cdots\cup \calB_M }]\label{eqn:lemma5}.
\end{align}
From Eqn.~\eqref{eqn:mid1} in the proof of uniform integrability, it follows that for the given $\veps>0$, there exists a finite constant $K>0$ that does not depend on $\tilde{\bA}_1$ such that for any $0<\alpha\le \alpha_{0}$ and any $(\tilde{\bA}_{1},\ldots,\tilde{\bA}_{M})$,
\begin{align*}
\bbE_1\bigg[\frac{T^*}{\log(1/\alpha)}\mathbbm{1}_{\{T^*(\alpha)/\log(1/\alpha)\ge K\}}\bigg]&\le \veps.
\end{align*} 
Therefore, we have that
\begin{align*}
&\bbE_1[T^*\mathbbm{1}_{ \calB_2\cup\cdots\cup \calB_M }]\\
&=\bbE_1\bigg[\frac{T^*}{\log(1/\alpha)}\mathbbm{1}_{\{T^*(\alpha)/\log(1/\alpha)\ge K\}}\mathbbm{1}_{ \calB_2\cup\cdots\cup \calB_M }\bigg]\log\Big(\frac{1}{\alpha}\Big)\\
&\quad+\bbE_1\bigg[\frac{T^*}{\log({1}/{\alpha})}\mathbbm{1}_{\{T^*(\alpha)/\log({1}/{\alpha})\le K\}}\mathbbm{1}_{ \calB_2\cup\cdots\cup \calB_M }\bigg]\log\Big(\frac{1}{\alpha}\Big)\\
&\le \veps\log\Big(\frac{1}{\alpha}\Big)+K\, P_{1}(\calB_2\cup\cdots\cup \calB_M)\log\Big(\frac{1}{\alpha}\Big)\\
&\overset{(a)}{\le }\veps\log\Big(\frac{1}{\alpha}\Big)+K\, \alpha\log\Big(\frac{1}{\alpha}\Big),
\end{align*}
where $(a)$ follows because $P_{1}(\calB_2\cup\cdots\cup \calB_M)$ is exactly the error probability $\alpha_1$ which  is upper bounded by $\alpha$. From~\eqref{eqn:lemma5}, we have
\begin{align}
\bbE_1\bigg[\frac{T_{1}}{\log(1/\alpha)}\bigg]&-\bbE_1\bigg[\frac{T^*}{\log(1/\alpha)}\bigg]\\
&\le \bbE_1\bigg[\frac{T^*}{\log(1/\alpha)}\mathbbm{1}_{ \calB_2\cup\cdots\cup \calB_M }\bigg]\\
&\le \veps+K\alpha,
\end{align} 
which, together with the arbitrariness of $\varepsilon$, implies that
\begin{equation}
\lim_{\alpha\to0^{+}}\sup_{\tilde{\bA}_{1}\in\mathcal{A}_{1}}\bigg(\bbE_1\bigg[\frac{T_{1}}{\log(1/\alpha)}\bigg]-\bbE_1\bigg[\frac{T^*}{\log(1/\alpha)}\bigg]\bigg)=0.\label{eqn:difference}
\end{equation}
Then it follows from the uniform convergence of $\bbE_1\big[\frac{T_{1}}{\log(1/\alpha)}\big]$ over $\mathcal{A}_{1}$ and~\eqref{eqn:difference} that
\begin{align*}
	\lim_{\alpha\to 0^+} \sup_{\tilde{\bA}_{1}\in\mathcal{A}_{1}}\bigg(\frac{\bbE_1[T^*]}{\log(1/\alpha)}-\frac{1}{D_1}\bigg)
=0,
\end{align*}
as desired. The arguments for other $i\in[M]$ proceed similarly.

\bibliographystyle{IEEEtran}
\bibliography{ref.bib}

\end{document}